\newcommand{\spara}[1]{\smallskip\noindent{\bf #1}}
\newenvironment {squishlist}
{\begin{list}{$\bullet$}
  { \setlength{\itemsep}{0pt}
     \setlength{\parsep}{3pt}
     \setlength{\topsep}{3pt}
     \setlength{\partopsep}{0pt}
     \setlength{\leftmargin}{1.5em}
     \setlength{\labelwidth}{1em}
     \setlength{\labelsep}{0.5em} } }
{\end{list}}
\newtheorem{theorem}{Theorem}[section]
\newtheorem{proposition}[theorem]{Proposition}
\newtheorem{definition}[theorem]{Definition}
\newcommand{\cmark}{\text{\ding{51}}}%
\newcommand{\xmark}{\text{\ding{55}}}%
\newcommand{\spdag}{\textsc{SPdag}\xspace}
\newcommand{\ignore}[1]{}
\newcommand{\dataset}[1]{\textsf{#1}}
\begin{document}

\title{Scalable Online Betweenness Centrality\\in Evolving Graphs}

\numberofauthors{3}
\author{
\alignauthor
Nicolas Kourtellis\\
       \affaddr{Yahoo Labs}\\
       \affaddr{Barcelona, Spain}\\[-4pt]
       \email{\sf kourtell@yahoo-inc.com}
\end{tabular}\begin{tabular}[t]{p{1.3\auwidth}}\centering
Gianmarco De Francisci Morales\\
       \affaddr{Yahoo Labs}\\
       \affaddr{Barcelona, Spain}\\[-4pt]
       \email{\sf gdfm@yahoo-inc.com}
\alignauthor
Francesco Bonchi\\
       \affaddr{Yahoo Labs}\\
       \affaddr{Barcelona, Spain}\\[-4pt]
       \email{\sf bonchi@yahoo-inc.com}
}

\maketitle

\begin{abstract}
Betweenness centrality is a classic measure that quantifies the importance of a graph element (vertex or edge) according to the fraction of shortest paths passing through it.
This measure is notoriously expensive to compute, and the best known algorithm runs in $\mathcal{O}(nm)$ time.
The problems of efficiency and scalability are exacerbated in a dynamic setting, where the input is an evolving graph seen edge by edge, and the goal is to keep the betweenness centrality up to date.
In this paper we propose the first truly scalable algorithm for online computation of betweenness centrality of both vertices and edges in an evolving graph where new edges are added and existing edges are removed.
Our algorithm is carefully engineered with out-of-core techniques and tailored for modern parallel stream processing engines that run on clusters of shared-nothing commodity hardware.
Hence, it is amenable to real-world deployment.
We experiment on graphs that are two orders of magnitude larger than previous studies.
Our method is able to keep the betweenness centrality measures up to date online, i.e., the time to update the measures is smaller than the inter-arrival time between two consecutive updates.
\end{abstract}




\section{Introduction}\label{sec:introduction}

Betweenness centrality measures the importance of an element of a graph, either a vertex or an edge, by the fraction of shortest paths that pass through it~\cite{anthonisse71rush-digraphs, freeman77betweenness, girvan02community}.

Intuitively, an edge that connects two vertices that have many common neighbors is to some extent redundant.
It belongs to a dense area of the graph, and information would be able to propagate even without it.
In other terms, not many shortest paths will need such edge.
This edge is what sociologists call a \emph{strong tie}.

Conversely, a \emph{weak tie} is an edge that connects two vertices with few common neighbors~\cite{Gran73weakties}.
Such an edge is likely to bridge two distinct dense areas of the graph (also called \emph{communities}), hence it participates in many shortest paths.
Information has to go over this ``bridge'' to propagate from one community to another.
The two vertices that own the bridge have a strategically favorable position because they can block information, or access it before the other individuals in their community: they span a \emph{``structural hole'}'~\cite{Burt92structuralholes}.

\citet{girvan02community} exploit this concept to define one of the first and most elegant algorithms for \emph{community detection}.
The algorithm iteratively removes the highest betweenness edge and produces a hierarchical decomposition of the graph, where the remaining disconnected components are the communities discovered.

Betweenness centrality has been used to analyze social networks~\cite{kahng03betweenness-correlation,liljeros01sexual-contacts}, protein networks~\cite{jeong01protein-nets}, wireless ad-hoc networks~\cite{maglaras11adhoc}, mobile phone call networks~\cite{catanese12forensic}, multiplayer online gaming networks~\cite{ang11mmog}, to inform the design of socially-aware P2P systems~\cite{kourtellis13p2pcentrality}, just to name a few examples.

Measuring betweenness centrality requires computing the shortest paths between all pairs of vertices in a graph.
This computation is possible in small graphs with a few tens of thousands vertices and edges, but it quickly becomes prohibitively expensive as the graphs grow larger.
Indeed, the best known algorithm for betweenness centrality,  proposed by \citet{brandes01betweenness}, runs in $\mathcal{O}(nm)$ time.

Due to its high cost, some works have proposed to parallelize its execution~\cite{bader06parallel-betweenness}.
Others, have proposed to approximate betweenness centrality through the use of randomized algorithms~\cite{brandes07centralityestimate,kourtellis12kpath,riondato2014approxbetweenness}.
However, the accuracy of these randomized algorithms can decrease considerably with the increase in graph size~\cite{kourtellis12kpath}.
Variants of betweenness centrality, such as flow betweenness~\cite{freeman91flow-betweenness} and random-walk betweenness~\cite{newman05random-walk-betweenness}, also run in $\mathcal{O}(nm)$ time. Finally, there are no cheaper measures that can be used as a proxy, as they do not to correlate well with betweenness centrality~\cite{Boldi13axiomscentrality} (differently from, e.g., degree centrality for PageRank).
So there is no easy workaround to the complexity of computing betweenness centrality.
State-of-the-art methods  are too expensive for graphs with millions of vertices and edges, thus making this measure hard to use in practical application scenarios.

The picture gets even worse when considering that real graphs of interest, such as the Web, social networks, and information networks, are dynamic in nature and evolve continuously with new edges and vertices arriving, and old edges being removed.
In such a scenario, the na\"{i}ve approach of recomputing the measure from scratch is impractical even on moderately large graphs.

\subsection{Related work}\label{sec:rel-work}

Recently there have been three main proposals by~\citet{lee12qube},~\citet{green12iter-betweenness} and~\citet{kas13betweenness} for the incremental computation of vertex betweenness centrality.

QUBE~\cite{lee12qube} relies on the decomposition of the graph into disjoint minimum union cycles (MUCs).
The algorithm uses this decomposition to identify vertices whose centrality can potentially change.
If the updates to the graph do not affect the decomposition, then the centrality needs to be recomputed (from scratch) only within the MUCs of the affected vertices.
If the decomposition changes, a new one must be computed, and then new centralities must be computed for all the affected components.
The performance of the algorithm is tightly connected to the size of the MUCs found, which in real (and especially social) graphs can be very large.
The algorithm depends on a preprocessing step that computes a minimum cycle basis for the given graph.
Then a MUC decomposition is computed by recursively taking the union of cycles in the minimum cycle basis that share a vertex.
Therefore, it requires $\mathcal{O}(m)$ storage.

QUBE leverages Brandes'~\cite{brandes01betweenness} algorithm to compute the centrality scores inside a MUC.
However, the algorithm also requires to compute the (number of) shortest paths between each pair of vertices in the affected MUC, which could require $\mathcal{O}(n^2)$ space in the worst case.
The overall space complexity is thus $\mathcal{O}(n^2+m)$.

Green et al.~\cite{green12iter-betweenness} propose to maintain the previously computed values of betweenness and the data structures needed by Brandes' algorithm~\cite{brandes01betweenness}, and update the ones affected by a graph change.
Their approach has a space complexity  of $\mathcal{O}(n(n+m))$, which becomes prohibitive for large graphs of millions of vertices.

Kas et al.~\cite{kas13betweenness} extend the work proposed by~\citet{ramalingam92incremental} to accommodate the computation of vertex betweenness centrality while adding edges or vertices.
Differently from Brandes' algorithm, their technique does not use dependencies but the actual shortest distances.
It keeps the data structures needed to update the betweenness centrality of vertices: distances, number of shortest paths and predecessors list.
The computational complexity can be at most as Brandes', i.e., $\mathcal{O}(nm)$.
However, the space complexity is the same as~\citet{green12iter-betweenness}, i.e., $\mathcal{O}(n(n+m))$.

Recently, and concurrently with our study, some works have also improved the space complexity of the incremental betweenness computation to $O(n^2)$~\cite{mclaughlin14gpu-betweenness, nasre14mfcs-betweenness}.
Similarly to our approach,~\citet{mclaughlin14gpu-betweenness} modifies the technique by~\citet{green12iter-betweenness} by dropping the predecessor lists, and porting it to GPUs to run on larger graphs.
However, it has only been tested on a small sample of source nodes (256) instead of the full graph.
Thus, its scalability to large graphs with higher memory demands, as well as its capability to follow dynamic rates of graph updates have not been demonstrated.
\citet{nasre14mfcs-betweenness} present a tighter upper bound of $O(nm^*)$ on time complexity, where $m^*$ is the number of edges that lie on shortest paths.
This is the first algorithm to have a lower time complexity than Brandes'~\citep{brandes01betweenness}.
However, while usually smaller, $m^*$ is still $O(m)$ in the worst case (e.g., an unweighted clique).
Most importantly, their algorithm is not straightforward to parallelize, as it requires access to the shortest paths \textsc{DAG} (\spdag) rooted in $v$ from each \spdag rooted in $s$ ($\forall s \in V$), where $v$ is an endpoint of the updated edge.
Therefore, it is not possible to parallelize the algorithm by distributing the set of \spdag on several machines, as in our approach.
Moreover, neither of these approaches can handle removal of edges or provide updated edge betweenness scores.

\begin{table}[t!]
\caption{Comparison with previous studies:  vertex ($C_V$) and edge centrality  ($C_E$), edge addition (+) and removal (-), parallel and streaming computation ($\|$), size of the largest graph used in the experiments ($|V|$ and $|E|$).
Note that \citet{nasre14mfcs-betweenness} have smaller time complexity than Brandes' and other algorithms.}
\centering
\small
\tabcolsep=0.06cm
\begin{tabular}{ccccccccrr}
\toprule
Method							& Year 	&	Space				&	$C_V$	&	$C_E$	&	$+$	&	$-$	&	$\|$	&	$|V|$ & $|E|$	\\
\midrule
\citet{lee12qube} 					& 2012	&	$\mathcal{O}(n^2$+$m)$	&	$\cmark$	&	$\xmark$	&	$\cmark$	&	$\cmark$	&	$\xmark$		&	12k	&	65k 		\\
\citet{green12iter-betweenness}		& 2012 	&	$\mathcal{O}(n^2$+$nm)$	&	$\cmark$	&	$\xmark$	&	$\cmark$	&	$\xmark$	&	$\xmark$		&	23k	&	94k		\\
\citet{kas13betweenness}				& 2013	&	$\mathcal{O}(n^2$+$nm)$	&	$\cmark$	&	$\xmark$	&	$\cmark$	&	$\xmark$	&	$\xmark$		&	8k	&	19k		\\
\citet{nasre14mfcs-betweenness}				& 2014	&	$\mathcal{O}(n^2)$	&	$\cmark$	&	$\xmark$	&	$\cmark$	&	$\xmark$	&	$\xmark$		&	-	&	-		\\
This work							& 2014	&	$\mathcal{O}(n^2)$		&	$\cmark$	&	$\cmark$	&	$\cmark$	&	$\cmark$	&	$\cmark$		&	2.2M	&	5.7M		\\
\bottomrule
\end{tabular}
\label{tab:rel-work}
\end{table}

\subsection{Contributions}

The main contribution of this paper is to provide \emph{the first truly scalable and practical framework for computing vertex and edge betweenness centrality of large evolving graphs, incrementally and online}.
Our proposal represents an advancement over the state of the art in four main aspects as summarized in Table~\ref{tab:rel-work}.

First, our method maintains \textit{both} vertex and edge betweenness centrality up-to-date for the same computational cost, while the previously proposed methods are only tailored for vertex betweenness.

Second, we handle \textit{both additions and removals} of edges in a unified approach, while the previously proposed methods, besides QUBE, can handle only addition of edges.
In fact, we show that the incremental, up-to-date edge betweenness under continuous edge removals allows for faster execution of the Girvan-Newman~\cite{girvan02community} algorithm on larger graphs.

Third, our method has reduced space overhead.
Similarly to previous work~\cite{green12iter-betweenness}, our algorithm maintains the previously computed values of betweenness and other needed data structures, and updates the ones affected by graph changes.
However, our method avoids maintaining the predecessors lists, thus reducing the space complexity to $\mathcal{O}(n^2)$~\cite{green13datastructurebetweenness}.
This optimization requires a scan of all neighbors instead of predecessors: we show that this does not affect the time complexity, and makes the algorithm more scalable and faster in practice.

Forth, our framework is truly scalable and amenable to real-world deployment.
The framework is carefully engineered to use \emph{out-of-core} techniques to store its data structures on disk in a compact binary format.
Data structures are read sequentially by employing \emph{columnar storage}, and memory structures are \emph{mapped directly on disk} to minimise memory copies.

Finally, we show how our method can be parallelized and deployed on top of modern parallel data processing engines that run on clusters of commodity hardware, such as Storm\footnote{\url{http://storm.apache.org}}, S4\footnote{\url{http://incubator.apache.org/s4}}, Samza\footnote{\url{http://samza.apache.org}}, or Hadoop.
Our experiments test our method on graphs with millions of vertices and edges, i.e., two orders of magnitude larger than previous studies.
By experimenting with real-world evolving graphs, we also show that our algorithm is able to keep the betweenness centrality measure up to date \emph{online}, i.e., the time to update the measure is always smaller than the inter-arrival time between two consecutive updates.
\enlargethispage*{2\baselineskip}

An open-source implementation of our method is available on GitHub.\footnote{\url{http://github.com/nicolas-kourtellis/StreamingBetweenness}}
Inclusion in SAMOA,\footnote{\url{http://samoa.incubator.apache.org}} a platform for mining big data streams~\cite{deFrancisciMorales2013samoa, deFrancisciMorales2015samoa} is also planned.

\spara{Roadmap.} Section~\ref{sec:brandes} introduces the basic concepts and recalls Brandes' algorithm.
We overview our algorithm for online betweenness centrality in Section~\ref{sec:overview} and cover the details on insertions and removals in Section~\ref{sec:add-rem}.
Section~\ref{sec:scalable} presents various optimizations that make our method scalable to large graphs.
Section~\ref{sec:exp-methodology-results} reports our experimental results, while
Section~\ref{sec:discussion} concludes the paper.

\section{Preliminaries}\label{sec:brandes}

Let $G = (V,E)$ be a (directed or undirected) graph, with $|V|=n$ and $|E| = m$.
Let $P_{s}(t)$ denote the \emph{set of predecessors} of a vertex $t$ on shortest paths from $s$ to $t$ in $G$.
Let $\sigma(s,t)$ denote the \emph{total number of shortest paths} from $s$ to $t$ in $G$ and, for any $v \in V$, let $\sigma(s,t \mid v)$ denote the number of shortest paths from $s$ to $t$ in $G$ that \emph{go through $v$}.
Note that $\sigma(s,s) = 1$, and $\sigma(s,t \mid v) = 0$ if $v \in \{s, t\}$ or if $v$ does not lie on any shortest path from $s$ to $t$.
Similarly, for any edge $e \in E$, let $\sigma(s,t \mid e)$ denote the number of shortest paths from $s$ to $t$ in $G$ that \emph{go through $e$}.
The betweenness centrality of a vertex $v$ is the sum over all pairs of vertices of the fractional count of shortest paths going through $v$.
\begin{definition}[Vertex Betweenness Centrality]\label{def:bet-cent-vertex-def}
For every vertex $v \in V$ of a graph $G(V, E)$, its \textup{betweenness centrality} $VBC(v)$ is defined as follows:
\begin{equation}
VBC(v) = \sum_{s,t \in V, s \neq t}  \frac{\sigma(s,t \mid v)}{\sigma(s,t)}.
\end{equation}
\end{definition}

\begin{definition}[Edge Betweenness Centrality]\label{def:bet-cent-edge-def}
For every edge $e \in E$ of a graph $G(V, E)$, its \textup{betweenness centrality} $EBC(e)$ is defined as follows:
\begin{equation}
EBC(e) = \sum_{s,t \in V, s \neq t} \frac{\sigma(s,t \mid e)}{\sigma(s,t)}.
\end{equation}
\end{definition}

\spara{Brandes' algorithm~\cite{brandes01betweenness}} leverages the notion of  \emph{dependency score} of a source vertex $s$ on another vertex $v$, defined as $\delta_{s }(v) = \sum_{t \neq s, v} \frac{\sigma(s,t \mid v)}{\sigma(s,t)}$.
The betweenness centrality $VBC(v)$ of any vertex $v$ can be expressed in terms of dependency scores as $VBC(v) = \sum_{s \neq v} \delta_{s}(v)$.
The following recursive relation on $\delta_{s}(v)$ is the key to Brandes' algorithm:
\begin{equation}\label{eq:dependency}
\delta_{s}(v) = \sum_{w:v \in P_{s}(w)}\frac{\sigma(s,v)}{\sigma(s,w)}(1+\delta_{s}(w))
\end{equation}

The algorithm takes as input a graph $G$=$(V,E)$ and outputs the betweenness centrality $VBC(v)$ of every $v \in V$.
It runs in two phases.
During the first phase, it performs a search on the whole graph to discover shortest paths, starting from every source vertex $s$.
When the search ends, it performs a \emph{dependency accumulation} step by backtracking along the shortest paths discovered.
During these two phases, the algorithm maintains four data structures for each vertex found on the way: a predecessors list $P_s[v]$, the distance $d_s[v]$ from the source, the number of shortest paths from the source $\sigma_s[v]$, and the dependency $\delta_s[v]$ accumulated when backtracking at the end of the search.

On unweighted graphs, Brandes'  algorithm uses a breadth first search (BFS) to discover shortest paths, and its running time is $\mathcal{O}(nm)$.
The space complexity of the algorithm is $\mathcal{O}(m+n)$.
While this algorithm was initially defined only for vertex betweenness it can be easily modified to produce edge betweenness centrality at the same time~\cite{brandes08variants-betweenness}.

\section{Framework Overview}\label{sec:overview}

Our framework computes betweenness centrality in evolving unweighted graphs.
We assume new edges are added to the graph or existing edges are removed from the graph, and these changes are seen as a stream of updates, i.e., one by one.
Henceforth, for sake of clarity, we assume an undirected graph.
However, our framework can also work on directed graphs by following outlinks in the search phase and inlinks in the backtracking phase rather than generic neighbours.

The framework is composed of two basic steps shown in Figure~\ref{fig:framework-addition-deletion}.
It accepts as input a graph $G(V,E)$ and a stream of edges $E_S$ to be added/removed, and outputs, for an updated graph $G'(V',E')$, the new betweenness centrality of vertices ($VBC'$) and edges ($EBC'$) for each vertex $v \in V'$ and edge $e \in E'$.

\begin{figure}[t]
\begin{framed}
\vspace{-2mm}
\small{
	\KwIn{Graph $G(V,E)$ and edge update stream $E_S$}
	\KwOut{$VBC'[V']$ and $EBC'[E']$ for updated $G'(V',E')$}
	\begin{description}
		\item [Step $1$:] Execute Brandes' alg. on G to create \& store data structures for incremental betweenness.
		\item [Step $2$:] \textbf{For each} update $e$$\in$$E_S$, execute Algorithm~\ref{algo:iterative-overview-addition-deletion-neighbors}.
		\Indp
		\item [Step $2.1$] $\text{  }$ Update vertex and edge betweenness.
		\item [Step $2.2$] $\text{  }$ Update data structures in memory or disk for next edge addition or removal.
		
	\end{description}
}
\vspace{-4mm}
\end{framed}
\caption{The proposed algorithmic framework.}\label{fig:framework-addition-deletion}
\end{figure}

The framework uses Brandes' algorithm as a building block in step 1:
this is executed only once, offline, before any update.
We modify the algorithm to ($i$) keep track of betweenness for vertices and edges at the same time, ($ii$) use additional data structures to allow for incremental computation, and ($iii$) remove the predecessors list to reduce the memory footprint and make out-of-core computation efficient.

\spara{Edge betweenness.}
By leveraging ideas from~\citet{brandes08variants-betweenness}, we modify the algorithm to produce edge betweenness centrality scores.
To compute simultaneously both edge and vertex betweenness, the algorithm stores the intermediate dependency values (Eq.~\ref{eq:dependency}) independently for each vertex.

\spara{Additional data structures.}
To allow for incremental computation we need to maintain some additional data.
In particular, we need a compact representation of the \emph{directed acyclic graph of shortest paths rooted in the source vertex} (which we refer to as \spdag), and the accumulated dependency values.
Thus, for each source vertex $s$ we maintain an additional data structure $BD[s]$ that stores its \emph{betweenness data}.
$BD[s]$ stores three pieces of information for each other vertex $t$:
\begin{squishlist}
\item $BD[s].d[t]$: the distance of vertex $t$ from source $s$;
\item $BD[s].\sigma[t]$: the number of shortest paths starting from source $s$ and ending at the given vertex $t$;
\item $BD[s].\delta[t]$: the dependency accumulated on the vertex $t$ in the backtracking to source $s$.
\end{squishlist}
The data structure is initialized in step 1, and populated at the end of the dependency accumulation phase.
Then, it is used in step 2.1 and updated in step 2.2.

\spara{Memory optimisation.}
Brandes' algorithm builds a list of predecessors during the search phase to speed up the backtracking phase.
Differently from the other data structures, the size of this list is variable and can grow considerably.
For example, by assuming just \num{4} predecessors on average, the size of the list would be as large as $BD[\cdot]$ (assuming integer identifiers).

To reduce the space complexity of the algorithm, we remove the predecessors lists.
When backtracking in the dependency accumulation phase, the algorithm checks all the neighbors of the current vertex, and uses the level of the vertex in the \spdag (i.e., the distance from the source) to pick the next vertices to visit.

For each source, we need to maintain an \spdag to all other vertices in the graph ($\mathcal{O}(n)$), along with the edges to their predecessors ($\mathcal{O}(m)$).
Therefore, each \spdag takes $\mathcal{O}(n+m)$ space.
In total, the space complexity of the original algorithm is $\mathcal{O}(n(n+m))$ with the predecessors lists.
By removing the predecessors lists, we reduce the space complexity to $\mathcal{O}(n^2)$.
Furthermore, the time complexity remains unchanged, as shown next.

To understand why removing the predecessors lists does not increase the time complexity, consider that the order in which vertices are traversed is unchanged.
Assume there are $k$ edges in the predecessors lists overall.
Brandes' algorithm checks each edge once during the search phase to populate the predecessors lists.
In the backtracking phase only the edges in predecessors list are checked.
Thus, overall the complexity of the algorithm is $O(m+k)$.
In the worst case, $k$ is of the same order as $m$, hence, the complexity of traversing all the predecessors lists is bounded by $\mathcal{O}(m)$.
Scanning the neighbors of each vertex is also bounded by $\mathcal{O}(m)$.
Thus, the algorithm's worst-case time complexity is unchanged.

An additional benefit of removing the predecessors lists is to also avoid the overhead of building it during the traversal.
In practice, this optimization not only reduces the space complexity, but also decreases the average running time of the algorithm, as shown in Section~\ref{sec:exp-methodology-results} and in previous work~\cite{green13datastructurebetweenness,sariyuce13betweennessGPUs}.
Moreover, by removing the predecessors lists, we do not need to maintain any variable-length data structure in our algorithm.
This simplification allows us to use very efficient out-of-core techniques to manage $BD[\cdot]$ when its size outgrows the available main memory (see Section \ref{subsec:out-of-core}).

\begin{algorithm}[t]
\small{	
	\KwIn{$G(V,E)$, $(u_1,u_2)$, $VBC[V]$, $EBC[E]$, $BD[V]$}
	\KwOut{$VBC'[V']$, $EBC'[E']$, $BD'[V']$}
	\Init{
		$EBC'[(u_1,u_2)] = 0.0$\;
		\textbf{Addition:} $E' \leftarrow E \cup (u_1,u_2);$
		\textbf{Deletion:} $E' \leftarrow E \setminus (u_1,u_2)$\;
	}
		
	\For{$s \in V'$}{
		$u_L \leftarrow$ findLowest($u_1$, $u_2$, $BD[s]$)\;
		$u_H \leftarrow$ findHighest($u_1$, $u_2$, $BD[s]$)\;
		
		$dd \leftarrow BD[s].d[u_L] - BD[s].d[u_H]$\;
		\If{$dd == 0$}{
			\Continue \tcp{same level addition/deletion}
		}
		\If{$dd \geq 1$}{
			\For{$r \in V$}{
				$\sigma'[r]=BD[s].\sigma[r];$
				$d'[r]=BD[s].d[r];$
				$\delta'[r]=0$\;
				$t[r] \leftarrow N_T$ \tcp{not touched before}
			}

			$LQ[|V|] \leftarrow$ empty queues;
			$Q_{BFS} \leftarrow$ empty queue\;
			$Q_{BFS} \leftarrow u_L$\;
			
			\If{addition}{
				\If{$dd==1$}{
					execute Alg.~\ref{algo:L1-addition-removal-neighbors}; \tcp{0 level rise}
				}
				\If{$dd > 1$}{
					execute Alg.~\ref{algo:L2-addition-neighbors}; \tcp{1 or more level rise}
				}
			}
			\ElseIf{deletion}{
				\If{$u_L$ has predecessors}{
					execute Alg.~\ref{algo:L1-addition-removal-neighbors}; \tcp{0 level drop}
				}
				\textbf{else}
					execute Alg.~\ref{algo:L2+removal-neighbors-pivots} \tcp{1 or more level drop}
			}
		}
		
		\Upds{}
		\For{$r \in V'$}{
			$BD[s].\sigma[r]=\sigma'[r]$;
			$BD[s].d[r]=d'[r]$\;
			\textbf{if $t[r]\neq N_T$ then}
				$BD[s].\delta[r]=\delta'[r]$\;
		}
	}
}
	\caption{Incremental computation of vertex and edge betweenness when adding or removing an edge.}\label{algo:iterative-overview-addition-deletion-neighbors}
\end{algorithm}

\subsection{Addition and removal of edges}
The addition of a new edge may cause structural changes in a graph, i.e., changes in the distance between vertices.
Depending on the previous distance between the two newly-connected endpoints, these changes may bring some vertices closer to the current source $s$.
Similarly, when an existing edge is removed structural changes may move the furthest endpoint away from the source.

\spara{Edge addition.} In step~2 of the framework the Algorithm~\ref{algo:iterative-overview-addition-deletion-neighbors} is executed to update the betweenness centrality of the graph when a new edge $(u_1,u_2)$ is added.
The algorithm has access to the data structure $BD[\cdot]$ computed in step 1, and runs independently for each source.
For a given source $s$, the algorithm uses $BD[s]$ to determine which endpoint of the new edge is closest to the current source $s$ (denoted $u_H$) and which one is furthest (denoted $u_L$). Let $dd(u_L,u_H)$ denote the difference in distance from the source of the two endpoints before the addition, i.e., $dd(u_L,u_H) = d(s,u_L) - d(s,u_H)$ where $d$ represents shortest path distance. Since only one edge is added at a time, we simply denote this difference as $dd$.

Depending on how large $dd$ is, a different type of update is needed.
In particular, three cases can arise:
\begin{squishlist}
\item $dd = 0$ (Proposition \ref{prop:same-level});
\item $dd = 1$ (\textbf{0 level rise},  Section \ref{subsec:0lir});
\item $dd > 1$ (\textbf{1 or more levels rise},  Section \ref{subsec:1li}).
\end{squishlist}
The first case involves two vertices that are at the same distance from the source vertex.
\begin{proposition}\label{prop:same-level}
Given two vertices $u_1$ and $u_2$ such that they have the same distance from a source vertex $s$, and an edge $e = (u_1,u_2)$ that connects the two vertices, no shortest path from $s$ to any other node in the graph passes trough the edge $e$, i.e., 
$d(s,u_1) = d(s,u_2) \implies \forall t \in V, \sigma(s, t \mid e) = 0$.
\end{proposition}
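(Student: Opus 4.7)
The plan is to argue by contradiction, leveraging the standard fact that every subpath of a shortest path is itself a shortest path in an unweighted graph (where each edge has unit length). First I would suppose toward contradiction that there exist vertices $s$ and $t$ such that some shortest $s$-$t$ path $\pi$ traverses the edge $e = (u_1, u_2)$, and without loss of generality I would orient $\pi$ so that $u_1$ appears before $u_2$ along $\pi$.

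Next I would split $\pi$ into the prefix $\pi_1$ from $s$ to $u_1$, the edge $e$, and the suffix from $u_2$ to $t$. By the subpath property, $\pi_1$ is a shortest $s$-$u_1$ path, so $|\pi_1| = d(s,u_1)$. Appending $e$ yields an $s$-$u_2$ walk of length $d(s,u_1)+1$, which again must be a shortest path by the same subpath argument, so $d(s,u_2) = d(s,u_1)+1$. This directly contradicts the hypothesis $d(s,u_1) = d(s,u_2)$, completing the proof that no such $\pi$ can exist, hence $\sigma(s, t \mid e) = 0$ for every $t \in V$.

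There is no real obstacle here: the only thing to be careful about is the WLOG orientation of the edge (since $e$ is undirected, either endpoint could be entered first, but the argument is symmetric in $u_1$ and $u_2$) and the implicit use of unit edge weights, which is justified by the global assumption in Section~\ref{sec:overview} that the framework operates on unweighted graphs. The statement of the proposition is essentially a one-line consequence of the BFS-layer structure underlying Brandes' algorithm, so the proof should be no more than a few lines in the final paper.
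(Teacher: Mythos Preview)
Your proposal is correct and follows essentially the same contradiction approach as the paper: both assume a shortest $s$--$t$ path traverses $e$ and derive that the two endpoints must lie on consecutive BFS levels, contradicting $d(s,u_1)=d(s,u_2)$. The only cosmetic difference is that the paper phrases the contradiction by exhibiting a strictly shorter $s$--$t$ path that bypasses $u_1$, whereas you invoke the subpath-optimality property to conclude $d(s,u_2)=d(s,u_1)+1$ directly; the underlying idea is identical.
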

\begin{proof}
We prove the proposition by contradiction.
Assume there exists a shortest path from vertex $s$ to $p$ that goes through the edge $(u_1,u_2)$, i.e., path $s, \ldots, u_1,u_2, \ldots, p$.
However, since $u_1$ and $u_2$ are at the same distance from the source $s$, we can construct another path that is one hop shorter, that starts from $s$ and ends in $p$ but skips $u_1$, i.e., $s,\ldots, u_2, \ldots, p$, which contradicts the assumption that $s, \ldots, u_1,u_2, \ldots, p$ is a shortest path.
\end{proof}
No shortest path goes through the edge, no change occurs in the \spdag, so the current source can be ignored.

In the second case, the new edge connects two vertices whose distance from the source differs only by one (Fig.~\ref{fig:examples-additions-deletions}a).
Thus, this addition does not cause any structural change in the \spdag, and all the distances remain the same.
However, new shortest paths can be created due to the addition, and therefore the shortest paths and the dependencies of the graph must be updated.

In the third and most complex case, $dd > 1$, structural changes occur in the \spdag (Fig.~\ref{fig:examples-additions-deletions}b depicts this case after the rise of $u_L$).
In order to handle these changes properly, we introduce the concept of \emph{pivot}.
\begin{definition}[Pivot]
Let $s$ be the current source, let $d()$ and $d'()$ be the distance before and after an update, respectively, we define \emph{pivot} a vertex $p_V \mid d(s,p_V) = d'(s,p_V) \wedge \exists \, w \in neighbors(p_V)$$: d(s,w)$$\neq$$d'(s,w)$.
\end{definition}

Thus, a pivot is a vertex that, under an edge addition or removal, does not change its distance from the source $s$, but has neighbors that do so.

When $dd > 1$, we need to first compute the new distances by leveraging the pivots.
Given that their distance has not changed, we can use them as starting points to correct the distances in the \spdag.
In the case of addition, all the pivots are situated in the sub-dag rooted in $u_L$, so we can combine the discovery of the pivot with the correction of the shortest paths.
The different cases that can arise are discussed in detail in Section~\ref{subsec:1li}.

There exists also a fourth case: the new edge connects two previously disconnected components.
This case degenerates into the case $dd=1$.
Indeed, no previous shortest path existed between the two disconnected components, so there is no structural change in the \spdag.

Finally, new vertices arriving in the graph are handled simply by adding them to the source set $V'$ with a zero $VBC'$.
Then, for all sources, the new vertex is considered as $u_L$ with $d[u_L]=d[u_H]+1$, where $u_H$ is the other endpoint of the incoming edge (therefore $dd = 1$).


\spara{Edge removal.}
In the case of an edge ($u_1$, $u_2$) removed from the graph, $dd$ is at most one, as the two endpoints are connected before the removal.
In this case, one of the two endpoints, $u_H$, is closest to the source, and clearly the edge $(u_H, u_L)$ belongs to at least one shortest path from the source $s$ to $u_L$.
Therefore, the algorithm needs to check whether $u_L$ has other shortest paths from $s$, not passing trough $(u_H, u_L)$.
Again, there are three cases:
\begin{squishlist}
\item $dd = 0$ (Proposition \ref{prop:same-level});
\item $dd = 1$ and $u_L$ has other predecessors (\textbf{0 level drop},  Section \ref{subsec:0lir});
\item $dd = 1$ and $u_L$ has no other predecessor (\textbf{1 or more levels drop},  Section \ref{subsec:1lr}).
\end{squishlist}

In the first case there are no shortest paths passing through the edge.
Therefore no changes occur in the \spdag, so the current source $s$ can be skipped.

In the second case, if $u_L$ is connected to at least one vertex $u'_H$ such that $dd(u_H, u'_H) = 0$, then $u_L$ will remain at the same distance (Fig.~\ref{fig:examples-additions-deletions}a), and no structural change occurs.
Thus distances remain the same.
However, some shortest paths coming through $(u_H, u_L)$ are lost, so the betweenness centrality needs to be updated.

In the third and most complex case, structural changes occur in the graph (Fig.~\ref{fig:examples-additions-deletions}b depicts this case before $u_L$ drops).
Also in this case we make use of pivots to correct the distances in the \spdag first, and subsequently adjust the shortest paths and dependency values.
However, not all pivots will be found in the sub-dag rooted in $u_L$ after the removal.
This difference makes this case more complicated than the addition, as some pivots cannot be discovered while adjusting the shortest paths (e.g., if nodes $u_L$ and $r$ were connected).
Therefore, we need to first search and find the pivots, and then start a second BFS from those pivots to correct the shortest paths.
The details of this case are covered in Section~\ref{subsec:1lr}.

There is also the case where the edge removed disconnects the sub-dag rooted in $u_L$ from the rest of the graph (or, similarly, turns $u_L$ into a singleton).
In this case, the shortest paths coming from the source, as well as the dependencies going to the source from this component must be removed and the betweenness adjusted.
If $u_L$ is to be removed, all its edges are iteratively removed and the singleton is replaced with zero VBC' (Section~\ref{subsec:discon}).

\begin{figure}[]
\begin{center}
	\includegraphics[scale=1.0]{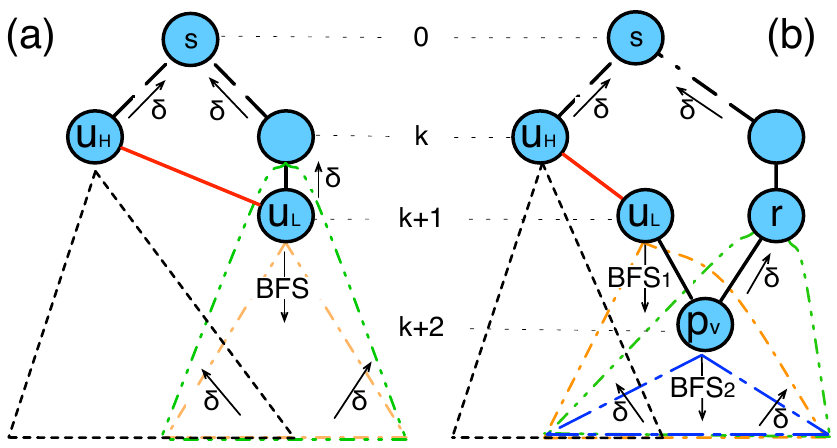}
\caption{The red (light) edge is added/removed and either does not cause structural changes (a), or does so (b).}
\label{fig:examples-additions-deletions}
\end{center}
\end{figure}

\section{Incremental Addition \& Removal}\label{sec:add-rem}

In this section we discuss the details of our framework in the case of edge addition and removal.

\subsection{No level change}\label{subsec:0lir}

\begin{algorithm}[t]
\small{
	\BFSul{}
	$LQ[d'[u_L]] \leftarrow u_L$; $t[u_L] \leftarrow D_N$\;
	\textbf{If $addition$ then} $\sigma'[u_L]$+=$BD[s].\sigma[u_H];$\;
	\textbf{If $deletion$ then} $\sigma'[u_L]$-=$BD[s].\sigma[u_H];$\;

	\While{$Q_{BFS}$ not empty}{
		$v \leftarrow Q_{BFS}$\;
		\For{$ w \in neighbors(v)$}{
			\If{$d'[w]==d'[v]+1$}{
				\If{$t[w]==N_T$}{
					$t[w]\leftarrow D_N;$
					$Q_{BFS} \leftarrow w;$
					$LQ[d'[w]]$$\leftarrow$$w$\;
				}
				$\sigma'[w]+=\sigma'[v]-BD[s].\sigma[v]$\;
			}
		}
	}

	\Dep{}
	\If{$deletion$}{
		$\delta'[u_H]=BD[s].\delta[u_H]-\frac{\sigma[u_H]}{\sigma[u_L]}(1+BD[s].\delta[u_L])$\;
		$LQ[d'[u_H]] \leftarrow u_H;$
		$t[u_H]\leftarrow U_P;$
	}
	$level=|V'|$\;
	\While{$level>0$}{
		\While{$LQ[level]$ not empty}{
			$w \leftarrow LQ[level]$\;
			\For{$ v \in neighbors(w)$}{
				\If{$d'[v] < d'[w]$}{

					Execute module in Alg.~\ref{algo:dep-mod-1}.

					\If{$addition$}{
						\If{$t[v]==U_P$ \AndKw $(v \neq u_H$ \OrKw $w \neq u_L)$}{
							$\delta'[v]-=\alpha$\;
						}
						\textbf{if $(v,w) \neq (u_L,u_H)$ then}
							$EBC'[(v,w)]-= \alpha$\;
					}
					\If{$deletion$}{
						$EBC'[(v,w)]-= \alpha$\;
						\textbf{if $t[v]==U_P$ then}
							$\delta'[v]-=\alpha$\;
					}
				}
			}
			\textbf{if $w\neq s$ then}
				$VBC[w]+=\delta'[w]-BD[s].\delta[w]$\;
		}
		$level = level - 1;$
	}
	}
	\caption{Betweenness update for addition or removal of an edge where $u_L$ remains at the same level after the update.}\label{algo:L1-addition-removal-neighbors}
\end{algorithm}

\begin{algorithm}[t]
\small{
	\If{$t[v]==N_T$}{
		$t[v]\leftarrow U_P;$ $\delta'[v]=BD[s].\delta[v];$ $LQ[level$-$1]$$\leftarrow$$v$\;
	}
	$c=\frac{\sigma'[v]}{\sigma'[w]}(1+\delta'[w]);$ $\delta'[v]$+=$c;$
	$EBC'[(v,w)]+=c$\;
	$\alpha=\frac{BD[s].\sigma[v]}{BD[s].\sigma[w]}(1+BD[s].\delta[w])$\;
}
\caption{Initialization dependency module.}\label{algo:dep-mod-1}
\end{algorithm}

Algorithm~\ref{algo:L1-addition-removal-neighbors} handles the case when the edge added or removed does not cause structural changes in the \spdag, i.e., $d(s, u_L)$ is unchanged (Fig.~\ref{fig:examples-additions-deletions}a).
The algorithm initializes $u_L$ by adding or removing all shortest paths from its predecessor $u_H$, depending on whether the edge $(u_H,u_L)$ is added or removed.
It also maintains a state flag $t[v]$ for each vertex $v$ which represents the direction in which the algorithm has encountered the vertex $v$: $D_N$ if descending (searching), $U_P$ if ascending (backtracking), and $N_T$ if untouched.

The vertices whose shortest paths from a source $s$ have potentially been altered are situated in the sub-dag rooted in $u_{L}$.
Thus, Algorithm~\ref{algo:L1-addition-removal-neighbors} performs a BFS traversal of the \spdag starting from $u_L$ by visiting neighbors whose distance from $s$ is higher than the current vertex (line 7, orange single-dotted triangle in Fig.~\ref{fig:examples-additions-deletions}a).
During the BFS, it updates the number of shortest paths to each vertex found to take into account the paths created (lost) by the addition (removal) of the edge $(u_H, u_L)$.
To avoid double counting, the old number of shortest paths from each predecessor $BD[s].\sigma[v]$ is subtracted before adding the new number of shortest paths $\sigma'[v]$ (line 10).
The vertices encountered in the BFS are added to a level-specific queue $LQ$ for later use.
At the end of the BFS, the number of shortest paths is up-to-date.

In the dependency accumulation phase, the algorithm polls the $LQ$ queue for each level and visits the vertices in reverse order of discovery by the BFS ($\delta$ arrows in Fig.~\ref{fig:examples-additions-deletions}a).
In the case of edge removal (lines 11-13), $u_H$ is inserted in $LQ$ before the dependency accumulation starts, to guarantee that the data structures of $u_H$ and its predecessors will be updated even though the edge ($u_H$,~$u_L$) does not exist anymore.

The vertices whose dependency needs an update are situated in two parts of the \spdag.
Part of them is in the sub-dag rooted in $u_L$, as discovered by the BFS.
For these vertices, the number of shortest paths was updated during the BFS.
They have been inserted in the appropriate queue $LQ$ and will be examined during the dependency accumulation phase.
The others are the predecessors of the sub-dag, which lay at the fringe of the first ones (green, double-dotted triangle in Fig.~\ref{fig:examples-additions-deletions}a).
These vertices are found while backtracking in the dependency accumulation phase by examining the neighbors of vertices in $LQ$.
Discovered predecessors are added in $LQ$ and examined in the next level (line 2 in Alg.~\ref{algo:dep-mod-1}).

During the backtracking phase, the algorithm corrects the dependency of vertices that are affected.
If a vertex was not encountered before, the old dependency for the particular path is subtracted (line 23 for addition, line 27 for removal).
The new dependency $c$ is added in line 3 of Alg.~\ref{algo:dep-mod-1} and then corrected by subtracting the old dependency $\alpha$ in line 24 for addition and line 26 for removal.
Similarly the $VBC$ is updated in line 28.

\subsection{Addition: 1 or more levels rise}\label{subsec:1li}

In this case, the \spdag undergoes structural changes.
To handle these changes, we need to find the pivots, which define the boundary of the structural change of the \spdag.
Given that the addition of the edge can only bring vertices closer to the source if they are reachable from $u_L$, all the pivots can be found while traversing the \spdag with a BFS starting from $u_L$, similarly to the previous case (orange, single-dotted triangle in Fig.~\ref{fig:examples-additions-deletions}b).

\begin{algorithm}[t]
\small{
	\BFSul{}
	$LQ[d[u_L]] \leftarrow u_L;$
	$d'[u_L]=BD[s].d[u_H]+1$\;

	\While{$Q_{BFS}$ not empty}{
		$v \leftarrow Q_{BFS};$	$t[v]\leftarrow D_N;$	$\sigma'[v]=0$\;

		\For{$ w \in neighbors(v)$}{
			\textbf{if $d'[w]+1==d'[v]$ then}
				$\sigma'[v]+=\sigma'[w];$\;
			\If{$d'[w]>d'[v]$ $\AndKw$ $t[w]==N_T$}{
				$t[w]$$\leftarrow$$D_N;$
				$d'[w]$=$d'[v]$+$1;$
				$LQ[d'[w]]$$\leftarrow$$w;$
				$Q_{BFS}$$\leftarrow$$w$\;
			}
			\If{$d'[w]==d'[v]$ \AndKw $BD[s].d[w] \neq BD[s].d[v]$}{
				\If{$t[w]==N_T$}{
					$t[w]\leftarrow D_N;$
					$LQ[d'[w]] \leftarrow w;$
					$Q_{BFS} \leftarrow w$
				}
			}
		}
	}
	\Dep{}
	$level=|V'|;$
	$te[e]\leftarrow N_T, e \in E$\;
	\While{$level>0$}{
		\While{$LQ[level]$ not empty}{
			$w \leftarrow LQ[level]$\;
			\For{$ v \in neighbors(w)$}{
				\If{$d'[v] < d'[w]$}{
					Execute module in Alg.~\ref{algo:dep-mod-1}.
				
					\If{$(t[v]=U_P)$ \AndKw $(v \neq u_H$ \OrKw $w \neq u_L)$}{
						$\delta'[v]-=\alpha$\;
					}
				
					\textbf{if $BD[s].d[v]==BD[s].d[w]$ then}
						$\alpha=0.0$\;
					\If{$BD[s].d[w]<BD[s].d[v]$}{
						$\alpha=\frac{BD[s].\sigma[w]}{BD[s].\sigma[v]}(1+BD[s].\delta[v])$\;
					}
					\textbf{if $(v,w) \neq (u_L,u_H)$ then}
						$EBC'[(v,w)]-= \alpha$\;
				}
				\If{$d'[v]==d'[w]$ \AndKw $BD[s].d[w]$$\neq$$BD[s].d[v]$}{
					Execute module in Alg.~\ref{algo:dep-mod-2}.
				}
			}
			\textbf{if $w\neq s$ then}
				$VBC[w]+=\delta'[w]-BD[s].\delta[w]$;
		}
		$level = level - 1$;
	}
	}
	\caption{Betweenness update for addition of an edge where $u_L$ rises one or more levels after addition.
	}\label{algo:L2-addition-neighbors}
\end{algorithm}

\begin{algorithm}[t]
\small{
	\If{$te[(v,w)]==N_T$}{
		$te[(v,w)]\leftarrow U_P;$
		$\alpha=0$\;
		\If{$BD[s].d[w]>BD[s].d[v]$}{
			$\alpha=\frac{BD[s].\sigma[v]}{BD[s].\sigma[w]}(1+BD[s].\delta[w])$\;
		}
		\If{$BD[s].d[w]<BD[s].d[v]$}{
			$\alpha=\frac{BD[s].\sigma[w]}{BD[s].\sigma[v]}(1+BD[s].\delta[v])$\;
		}
		$EBC[(v,w)]-= \alpha$\;
	}
}
\caption{EBC correction if endpoints were not at the same level before the change.}\label{algo:dep-mod-2}
\end{algorithm}

Algorithm~\ref{algo:L2-addition-neighbors} shows the details.
The algorithm begins by initializing the new distance of $u_L$ (line 1).
Similarly to the previous case, it adds all the new shortest paths to vertices found during its BFS traversal.
However, in this case there are structural changes in the \spdag due to the connection of two endpoints previously far apart.
The vertices reachable from $u_L$ might be pulled closer to the source $s$ along with $u_L$.
As a result, new shortest paths may emerge, old shortest paths may become obsolete, and distances from the source may change.
Therefore, the vertices do not inherit the shortest paths from their predecessors (line 3), rather, the shortest paths are computed during the modified BFS.

The structural changes that can happen in the \spdag are depicted in Figure~\ref{fig:cases-addition-deletion}.
Let us examine these cases for a vertex $x$ and its neighbor $y$.
Let a \emph{sibling} be a neighbor of vertex that is at the same distance from the source.
Before the addition, $x$ and $y$ could be either siblings (case 1, Fig.~\ref{fig:cases-addition-deletion}) or predecessor and successor (case 2).
If $y$ is now a predecessor of $x$ (case 1a), the algorithm adjusts the shortest paths of $x$ (line 5).
If $x$ was and still is a predecessor of $y$ (line 6), the new edge has caused both $x$ and $y$ to move closer to $s$ by the same amount (case 2a).
In this case, we update the distance from $s$ and insert $y$ in the BFS queue for further exploration (line 7).
If $y$ is now on the same level as $x$, but was not before the addition (case 2b), $y$ is added to the BFS for further exploration (line 10).
If $y$ moved two levels w.r.t. $x$ (case 2c), it will be discovered first in the BFS after the update (line 6).

Clearly there are no structural changes in the vertices at levels above $u_{L}$ (i.e., closer to the source).
The possible sub-cases examined cover all possible scenarios of how a pair of connected vertices (and thus their edge) can be found after the addition of the new edge.
The shortest paths and distances ($\sigma'[\cdot]$, $d'[\cdot]$) are updated in the way that the original Brandes' algorithm proposes.

In the dependency accumulation phase, the dependency score of all vertices examined is updated with the new number of shortest paths computed in the BFS phase.
This part of the algorithm is similar to the corresponding one in Algorithm~\ref{algo:L1-addition-removal-neighbors}.
However, there are important differences in the correction of the dependency for the edge betweenness centrality (lines 20--25, Alg.~\ref{algo:L2-addition-neighbors}).
Assuming $v$ is $x$ and $w$ is $y$, if both $x$ and $y$ remain at the same relative distance from the source, the dependency to be subtracted $\alpha$ is calculated in line 4 of Alg.~\ref{algo:dep-mod-1} (case 2a).
However, if $y$ moves closer (case 2c), then $y$ was a successor of $x$ but now it is a predecessor of $x$.
Therefore, we need to subtract the dependency on $y$.
The subtracted value is adjusted by switching $w$ with $v$ in the dependency accumulation formula (lines 21--22, Alg.~\ref{algo:L2-addition-neighbors}).

If the endpoints of the edge were at the same level before the addition (case 1) there is no need for correction since no dependency was accumulated on the edge (line 20, Alg.~\ref{algo:L2-addition-neighbors}).
If the endpoints are now at the same level but were not before (case 2b), the old dependency needs to be subtracted from the betweenness of the edge.
Also, the edge is marked not to be traversed again (Alg.~\ref{algo:dep-mod-2}).
In Alg.~\ref{algo:dep-mod-2}, if $w$ was a successor of $v$, the old dependency is calculated on line 4, whereas if $w$ was a predecessor of $v$, the old dependency is calculated on line 6.
The vertex betweenness centrality is updated on line 26 of Alg.~\ref{algo:L2-addition-neighbors} by adding the new dependency accumulated on the vertex $w$ and subtracting the old dependency.

In summary, all possible cases of structural changes in the \spdag below $u_{L}$ are covered by Alg.~\ref{algo:L2-addition-neighbors}, which correctly updates the betweenness scores and accompanying data structures of all affected vertices and edges.

\begin{figure}[t]
\begin{center}
	\includegraphics[scale=0.8]{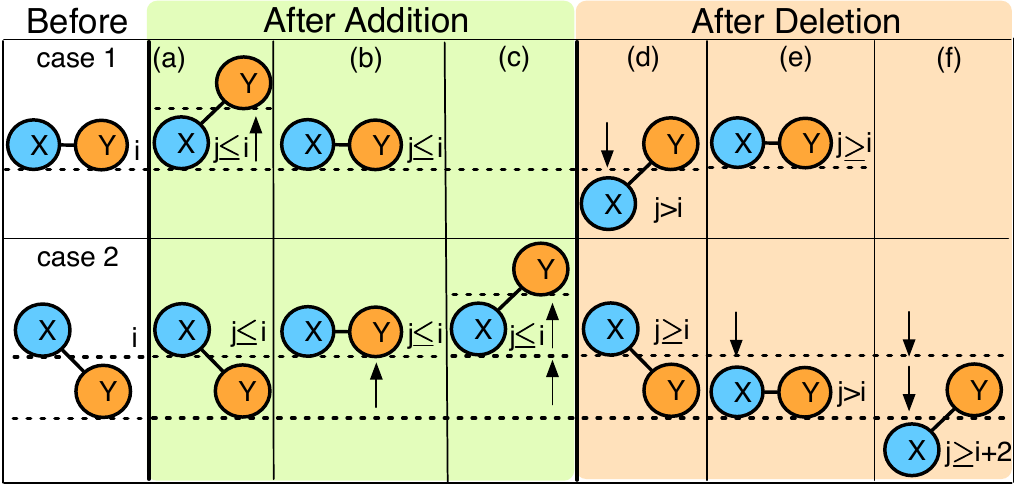}
\caption{Possible configurations of an edge before and after an update that causes structural changes.}
\label{fig:cases-addition-deletion}
\end{center}
\end{figure}

\subsection{Removal: 1 or more levels drop}\label{subsec:1lr}

\begin{algorithm}[t]
\small{
	$PQ[|V|] \leftarrow$ empty queues;
	$Q_{BFS} \leftarrow$ empty queue\;

	$first = V;$
	$Q_{BFS} \leftarrow u_L;$
	$t[u_L]\leftarrow N_P;$

	\While{$Q_{BFS}$ is not empty} {
		$v \leftarrow Q_{BFS}$\;
		\For{$ w \in neighbors(v)$} {
			\If{$BD[s].d[w]$+$1$$==$$BD[s].d[v]$ \AndKw $t[w]$==$N_T$ \AndKw $t[v]$$\neq$$P_V$} {
				$PQ[d[v]] \leftarrow v;$
				$t[v]\leftarrow P_V$ \tcp{a new pivot}

				\If{$first>BD[s].d[v]$}{
					$first=BD[s].d[v]$ \tcp{the first pivot}
				}
			}
			\ElseIf{$BD[s].d[w]$==$BD[s].d[v]$+$1$ \OrKw $BD[s].d[w]==BD[s].d[v]$}{
				\If{$t[w]==N_T$}{
					$Q_{BFS} \leftarrow w;$
					$t[w]=N_P$ \tcp{not a pivot}
				}
			}
		}
	}	
	\If{PQ[.] not empty}{
		Execute Algorithm~\ref{algo:L2+removal-neighbors-DepPivots} \tcp{ $\geq$ 1 pivots}
	}
	\ElseIf{PQ[.] is empty}{
		Execute Algorithm~\ref{algo:L2+removal-neighbors-disconn}	\tcp{Disconnected component}
	}
	}
	\caption{Pivot finder when an edge removal causes $u_L$ to drop one or more levels.}\label{algo:L2+removal-neighbors-pivots}
\end{algorithm}

Algorithm~\ref{algo:L2+removal-neighbors-pivots} shows how to handle an edge removal that causes $u_L$ to move one or more hops away from the source $s$.
First, the algorithm needs to find the pivots in the sub-dag rooted in $u_L$.
Then, distances from the source are updated by exploring the graph starting from the discovered pivots.
Finally, the usual dependency accumulation corrects the values of dependency and betweenness centrality.

The algorithm starts a BFS from $u_L$ to search for vertices that have predecessors still connected to the rest of the graph, i.e., outside the scope of the sub-dag rooted in $u_L$ (lines 6--9) and marks them as pivots ($P_V$) ($BFS_1$, orange, single-dotted triangle in Fig.~\ref{fig:examples-additions-deletions}b).
The rest are non-pivots ($N_P$).
Figure~\ref{fig:cases-addition-deletion} illustrates the possible cases that can be encountered by this BFS traversal in the case of edge removal.
Pivots are represented by vertex $y$ if $y$ remained in level $i$ (case 1d), or if $y$ remained in level $i+1$ (cases 2e and 2f).

After the first BFS finishes, Algorithm~\ref{algo:L2+removal-neighbors-DepPivots} starts a new BFS ($BFS_2$, blue, single-dashed triangle in Fig.~\ref{fig:examples-additions-deletions}b) from the pivots found.
This new BFS first corrects the distances of the vertices discovered (line 9), and their shortest path counts (line 12).
Furthermore, in lines 23-31 it adjusts the dependency and betweenness values in a similar fashion as described in Section~\ref{subsec:1li}, by covering all possible cases (Fig.~\ref{fig:cases-addition-deletion}) and following the $\delta$ arrows in Fig.~\ref{fig:examples-additions-deletions}b.

If $u_L$ has at least one sibling before the removal of the edge, we can use the same general technique presented above.
However, some optimizations to reduce the computation overhead are possible.
Indeed, the first search from $u_L$ to find the pivots can be avoided.
In fact, in this case all possible scenarios shown in Figure~\ref{fig:cases-addition-deletion} can be seamlessly found and resolved while adjusting the shortest paths (Algorithm~\ref{algo:L2+removal-neighbors-DepPivots}), since the starting pivots are the siblings of $u_L$, and the other pivots are all found during the BFS.
These optimizations are explained in detail next.

\begin{algorithm}[t]
\small{
		\BFS{ from first pivot point(s)}
		$Q_{BFS} \leftarrow$ empty queue;
		$Q_{BFS} \leftarrow PQ[first]$\;
		$next=first+1$\;

		\While{$Q_{BFS}$ not empty}{
			$v \leftarrow Q_{BFS};$
			$t[v]\leftarrow D_N;$
			$LQ[d'[v]] \leftarrow v;$
			$\sigma'[v]=0$\;

			\If{$next==d'[v]+1$}{
				$Q_{BFS} \leftarrow PQ[next];$
				$next = next + 1$
			}
			
			\For{$ w \in neighbors(v)$}{
				\If{$t[w]==N_P$}{
					$t[w]=D_N;$
					$d'[w]=d'[v]+1;$
					$Q_{BFS} \leftarrow w$\;
				}
				\textbf{else if $t[w]==P_V$ then}
					$t[w]\leftarrow D_N$\;
				\Else{
					\textbf{if $d'[w]+1==d'[v]$ then}
						$\sigma'[v]+=\sigma'[v]$\;
					\If{$d'[w]$==$d'[v]$ \AndKw $BD[s].d[w]$$\neq$$BD[s].d[v]$}{
						\If{$BD.d[w]>BD.d[v]$ \AndKw $t[w] \neq D_N$}{
							$t[w]\leftarrow D_N;$
							$LQ[d'[w]] \leftarrow w;$
							$Q_{BFS} \leftarrow w$\;
						}
					}
				}
			}
		}

		\Dep{}
		$\delta'[u_H]=BD[s].\delta[u_H]-\frac{\sigma[u_H]}{\sigma[u_L]}(1+BD[s].\delta[u_L])$\;
		$LQ[d'[u_H]] \leftarrow u_H;$
		$te[e]$$\leftarrow$$N_T, e \in E;$
		$t[u_H]$$\leftarrow$$U_P;$
		$level=|V'|$\;
		\While{$level>0$}{
			\While{$LQ[level]$ not empty}{
				$w \leftarrow LQ[level]$\;
				\For{$ v \in neighbors(w)$}{
					\If{$d'[v] < d'[w]$}{
					
						Execute module in Alg.~\ref{algo:dep-mod-1};
						$\alpha=0;$\;
						\If{$BD[s].d[w]>BD[s].d[v]$}{
							$\alpha=\frac{BD[s].\sigma[v]}{BD[s].\sigma[w]}(1+BD[s].\delta[w])$\;
						}
						\ElseIf{$BD[s].d[w]<BD[s].d[v]$}{
							$\alpha=\frac{BD[s].\sigma[w]}{BD[s].\sigma[v]}(1+BD[s].\delta[v])$\;
						}
						\textbf{if $t[v]==U_P$ then}
							$\delta'[v]-=\alpha$\;
						$EBC'[(v,w)]-=\alpha$\;
					}
					\If{$d'[v]$==$d'[w]$ \AndKw $BD[s].d[w]$$\neq$$BD[s].d[v]$}{
						Execute module in Alg.~\ref{algo:dep-mod-2}.
					}
				}
				\textbf{if $w\neq s$ then}
					$VBC[w]+=\delta'[w]-BD[s].\delta[w]$\;
			}
			$level = level - 1$
		}
	}
	\caption{Betweenness update for edge removal where $u_L$ drops one or more levels after removal.}\label{algo:L2+removal-neighbors-DepPivots}
\end{algorithm}

\subsection{Removal: 1 level drop (opimization)}\label{subsec:1slr}

Algorithms~\ref{algo:L1-removal-neighbors-p1} and~\ref{algo:L1-removal-neighbors-p2} show in detail the steps needed to be performed by the framework when an edge removed forces $u_L$ to drop only one level from the source, as an optimization from the previous method that is generic and covers one or more levels drop.

In this case, $u_L$ does not have any other predecessors in the previous level.
Note that this one level drop of $u_L$ can lead to subsequent changes to many of $u_L$'s successors with respect to their distance from source $s$.
By executing Algorithm~\ref{algo:L1-removal-neighbors-p1}, a BFS starts from $u_L$ which targets on first fixing the distances of the found vertices, and after that, adjusting their shortest path counts.

During the BFS traversal from $u_L$, various sub-cases are encountered depending on the neighbors of each vertex and where it was positioned with respect to the vertex under examination (lines 8, 11 and 20).
Figure~\ref{fig:cases-addition-deletion} illustrates these possible sub-cases.
Note that in some cases (1e and 2d), there is no relative change: the vertices remain in the same distance difference as they were before the removal (regardless if they stayed in the same position or moved together downwards).
However, there are also cases where one of the two vertices moves and the other does not, due to their predecessors (Figure~\ref{fig:cases-addition-deletion}, cases 1d and 2e).
In such cases, not only the distance of the moved vertex must be fixed, but also the shortest paths counts from source.
All these sub-cases are examined in Algorithm~\ref{algo:L1-removal-neighbors-p1}.

Observe that we start the BFS from $u_L$ to minimize the scope of changes, i.e., to investigate only the sub-dag directly under $u_L$.
Therefore, a simple BFS is not enough to perform both of these adjustments (i.e., distance and shortest paths) in one pass as before.
Instead, in some cases, we need to check the neighbors $z$ of the neighbor $w$ under examination, if this neighbor is not properly adjusted yet (lines 25--27).
If we had executed a BFS from the same-level neighbors of $u_L$, we could potentially perform a one-pass BFS.
However, this would lead to greater costs with respect to how many vertices are unnecessarily touched on the way down with the BFS and then on the way up during the accumulation phase.

Any vertices discovered that did not move but have neighbors who moved, are pivoting points and are placed into the same-level BFS queue to be examined in this BFS level (line 31).
Otherwise, they are placed in the next-level BFS queue (lines 29 or 33).
If they have moved, their distance is also corrected (line 33).
Finally, Algorithm~\ref{algo:L1-removal-neighbors-p2} is executed to adjust the dependencies of the affected vertices and edges.

\begin{algorithm}[t]

	\BFSul{}
	$te[e]=N_T, e \in E$\;
	$Q_{Same} \leftarrow$ empty queue; $Q_{Next} \leftarrow$ empty queue\;
	$d'[u_L]++;$ $t[u_L]=D_N;$ $Q_{Same} \leftarrow u_L$\;
	$LQ[d[u_L]] \leftarrow u_L;$
	$d'[u_L]=BD[s].d[u_H]+1$\;
		
	\While{$Q_{Same}$ not empty}{
		$v \leftarrow Q_{Same};$	$\sigma'[v]=0;$	$LQ[d'[v]] \leftarrow v$\;
		
		\For{$w \in neighbors(v$)}{
			\If{$BD[s].d[w]==BD[s].d[v]$}{
				\If{$t[w]\neq D_N$ \AndKw $t[v]==D_N$}{
					$\sigma'[v]+=\sigma'[w]$\;
				}
			}
			\If{$BD[s].d[w]+1==BD[s].d[v]$}{
				\If{$t[w]\neq D_N$ \AndKw $t[v]==P$}{
					$\sigma'[v]+=\sigma'[w]$\;
				}
				\If{$t[w]==D_N$ \AndKw $t[v]==D_N$}{
					$\sigma'[v]+=\sigma'[w]$\;
				}
				\If{$t[w]==D_N$ \AndKw $t[v]==P$}{
					\If{$te[(v,w)]==N_T$}{
						$\alpha=\frac{ND[s].\sigma[w]}{ND[s].\sigma[v]}(1+ND[s].\delta[v])$\;
						$te[(v,w)]=D_N;$
						$EBC[(v,w)]-= \alpha$\;
					}
				}

			}
			\If{$BD[s].d[w]==BD[s].d[v]+1$ \AndKw $t[w]==N_T$}{
					\If{$t[v]==P$} {
                            			$t[w]=P$\;
					}
					\Else {
						$t[w]=D_N$\;
						\For{$z \in neighbors(w$)}{
							\If{$BD[s].d[z]+1==BD[s].d[w]$ $\AndKw$ $t[z]\neq D_N$}{
								$t[w]=P;$	$break$\;
							}
						}
					}
					\If{$t[w]==P$ $\AndKw$ $t[v]==P$}{
						$Q_{N} \leftarrow w;$
					}
					\ElseIf{$t[w]==P$ $\AndKw$ $t[v]==D_N$} {
						$Q_{S} \leftarrow w;$
					}
					\ElseIf{$t[w]==D_N$} {
						$Q_{N} \leftarrow w$;
						$d'[w] = d'[w] + 1$\;
					}
			}
		}
		\If{$Q_{S}$ is empty $\AndKw$ $Q_{N}$ not empty} {
			$Q_{S} \leftarrow Q_{N};$	$Q_{N} \leftarrow$ empty queue\;
		}
	}
	\caption{Betweenness update for removal of an edge where $u_L$ drops one level (BFS part).}\label{algo:L1-removal-neighbors-p1}
\end{algorithm}

\begin{algorithm}[t]
	\Dep{}
	$\delta'[u_H]=BD[s].\delta[u_H]-\frac{\sigma[u_H]}{\sigma[u_L]}(1+BD[s].\delta[u_L])$\;
	$LQ[d'[u_H]] \leftarrow u_H;$	$t[u_H]=U_P;$
	$level=|V'|$\;
	\While{$level>0$}{
		\While{$LQ[level]$ not empty}{
			$w \leftarrow LQ[level]$\;
			\For{$v \in neighbors(w$)}{
				\If{$d'[v] < d'[w]$}{
					$c=\frac{\sigma'[v]}{\sigma'[w]}(1+\delta'[w]);$	$\delta'[v]+=c;$
					$\alpha=0$\;
					\If{$BD[s].d[v]<BD[s].d[w]$}{
						$\alpha=\frac{BD[s].\sigma[v]}{BD[s].\sigma[w]}(1+BD[s].\delta[w])$\;
					}
					\If{$t[v]==0$}{
						$t[v]=1;$	$LQ[level-1] \leftarrow v$\;
						$\delta'[v]+=BD[s].\delta[v]$\;
					}
					$EBC'[(v,w)]+=c-\alpha$\;

					\If{$t[v]==1$}{
						$\delta'[v]-=\alpha$\;
					}
				}
			}
			\If{$w\neq s$}{
				$VBC[w]+=\delta'[w]-BD[s].\delta[w]$\;
			}
		}
		$level = level - 1$
	}

	\caption{Betweenness update for removal of an edge where $u_L$ drops one level (depend. acc. part).}\label{algo:L1-removal-neighbors-p2}
\end{algorithm}

\subsection{Removal: Disconnected Component}\label{subsec:discon}

If no pivoting points are found from Algorithm~\ref{algo:L2+removal-neighbors-pivots}, then it means the sub-dag under $u_L$ is a disconnected component and thus unreachable from source $s$.
Therefore, Algorithm~\ref{algo:L2+removal-neighbors-disconn} is executed to re-initialize the data structures and correct the betweenness values of vertices and edges.
First, a BFS is started from $u_L$ and to initialize the data structures of the vertices found as well as to adjust the betweenness centrality values for vertices and edges.
Second, the algorithm backtracks the $LQ$ queues from $u_H$ up and adjusts all dependency values and betweenness values of vertices and edges in the other disconnected component.
If indeed the removal disconnects this sub-graph (either a portion of it or just one vertex into a singleton), this part of the framework will be executed for every source of the graph, and the data structures and betweenness scores will be adjusted accordingly in the storage files.
We note that this step is needed for every source and cannot be avoided, e.g., by treating it as a special case at the beginning of the framework.

\begin{algorithm}[t]

		\BFSul{}
		$Q_{BFS} \leftarrow$ empty queue\;
		$Q_{BFS} \leftarrow u_L;$		$d'[u_L]=-1;$	$\sigma'[u_L]=0;$	$\delta'[u_L]=0$\;
		
		\While{$Q_{BFS}$ is not empty} {
			$v \leftarrow Q_{BFS}$\;
			\For{$ w \in neighbors(v)$}{
				\If{$BD[s].d[w]==BD[s].d[v]+1$}{
					\If{$t[w]==D_N$}{
						$Q_{BFS} \leftarrow w;$	$t[w]=M;$	$d'[w]=-1;$	$\sigma'[w]=0;$	$\delta'[w]=0$\;
					}
					$\alpha=\frac{BD[s].\sigma[v]}{BD[s].\sigma[w]}(1+BD[s].\delta[w])$\;
					$VBC[w]-=\alpha;$	$EBC'[(v,w)]-=\alpha$\;
				}
			}
		}		
		\Dep{}
		$\delta'[u_H]=BD[s].\delta[u_H]-\frac{\sigma[u_H]}{\sigma[u_L]}(1+BD[s].\delta[u_L])$\;
		$LQ[d'[u_H]] \leftarrow u_H;$
		$t[u_H]$$\leftarrow$$U_P;$
		$level=|V'|$\;
		\While{$level>0$}{
			\While{$LQ[level]$ not empty}{
				$w \leftarrow LQ[level]$\;
				\For{$v \in neighbors(w)$}{
					\If{$d'[v] < d'[w]$}{
						Execute module in Alg.~\ref{algo:dep-mod-1}.\;
						\textbf{if $t[v]==U_P$ then}
							$\delta'[v]-=\alpha$\;
						$EBC'[(v,w)]-=\alpha$\;
					}
				}
				\If{$w\neq s$}{
					$VBC[w]+=\delta'[w]-BD[s].\delta[w]$\;
				}
			}
			$level = level - 1$
		}
	
	\caption{Betweenness update for removal of an edge where $u_L$'s subdag forms a disconnects component.}\label{algo:L2+removal-neighbors-disconn}
\end{algorithm}

\section{Scalability}\label{sec:scalable}

The algorithm described previously is able to update the betweenness centrality of a graph incrementally.
Nevertheless, to attain our goal of a useful and practical framework for real-world deployment, the algorithm is not enough.
Indeed, a number of scalability issues needs to be addressed in order to have a practical tool.

The space complexity of the algorithm is quadratic in the number of vertices.
When dealing with large graphs, the space requirements can easily outgrow the available main memory.
In this case, we can use the disk to store the required data structures, as our algorithm allows for an efficient out-of-core implementation.

Despite this feature, the space requirements for very large graphs can still outgrow the disk.
Furthermore, disk access will become a bottleneck due to reading and writing large amounts of data.
A simple solution is to divide the execution across multiple machines with multiple disks.
This solution not only allows the framework to scale to larger inputs, but also leads to improved speedup over the sequential version.

\subsection{Out-of-core organization}\label{subsec:out-of-core}

The removal of the predecessors lists from the algorithm reduces its space complexity by $\mathcal{O}(nm)$.
As an additional benefit, it leaves the algorithm with no variable-length data structure.
Indeed, our algorithm stores only three fixed size data structures per vertex: the distance from the source $d[\cdot]$ (1 byte), the number of shortest paths from the source $\sigma[\cdot]$ (2 bytes) and the accumulated dependency $\delta[\cdot]$ (8 bytes).
This organization allows an efficient data layout on disk.
The graph is loaded and kept in memory, allowing fast random access of vertex neighborhoods.

We encode $BD[\cdot]$ in binary format on disk.
For each source $s$, we store the data for each other vertex in a columnar fashion, i.e., we store on disk all the distances, then all the numbers of shortest paths, and finally the dependency values, in order: $\left\{BD[s].d[\cdot], BD[s].\sigma[\cdot], BD[s].\delta[\cdot]\right\}$.
We avoid storing the vertex IDs for sources and destinations by storing the data structures sequentially on disk, and inferring the ID from the order.
Overall, for each source the algorithm stores $3$ arrays, with sizes $1 \times n$, $2 \times n$ and $8 \times n$ bytes, respectively.
Because of the binary format, each array can be read by using file channels and byte buffers, loaded directly in memory and be ready for use.
This optimization avoids memory allocations and memory copies during the execution of the algorithm.
In practice, the computation happens at the speed of sequential disk access.
In the future we plan to explore compression schemes to reduce the space and disk access overhead.

As explained in the previous sections, the work done by the algorithm depends on $dd(u_L, u_H)$.
Therefore, after loading the distances from disk, we check the distance for the endpoints $u_H$ and $u_L$.
If they are at the same distance ($dd=0$), we skip directly to the next source without loading the rest of the data structure.
This operation is efficient because the data structures have fixed size so the offset to skip to reach the beginning of the next source is constant.
Otherwise, the arrays are loaded in memory and Algorithm~\ref{algo:iterative-overview-addition-deletion-neighbors} is executed.
When a source is covered, the algorithm writes the arrays back to disk, in place and sequentially.

\subsection{Parallelization}

The out-of-core version presented in the previous section is slower than the in-memory one, but enables the framework to scale to large graphs.
However, the space requirements for real graphs can be staggering (for a graph of 1M vertices we need $\approx11$TB of space).
Although nowadays is fairly easy to have such an amount of disk storage available on servers, reading and writing this amount of data would incur a significant overhead.

To solve this issue, we take advantage of the parallel nature of our framework and propose to distribute the computation on a cluster of shared-nothing machines.
Let $p$ be the number of available machines in the system.
We distribute the data structure $BD[\cdot]$ evenly among the $p$ machines, i.e., each machine will be allocated $\approx n/p$ sources.
This parallelization is possible since each source can be examined and updated independently, and the partial betweenness scores can be summed at the end.

The distributed version of the framework has multiple advantages.
First, the space requirements per machine are reduced to $\mathcal{O}(n^2/p)$.
For example, if $100$ machines are available, for a graph with $1$M vertices the storage needed is just 150GB per machine, which is perfectly reasonable by today's standards.
Second, the overall work of the algorithm is divided among $p$ processors, leading to a theoretical $p$ speedup over single machine execution.
Furthermore, the disk access workload is distributed in a balanced fashion across multiple disks.
As a result, the disk access speed of the framework is $p$ times faster.
As we show in the next section, the speedup of the framework is indeed almost ideal.

\subsection{Online betweenness updates}

Real graphs are constantly evolving with addition of new vertices and edges, and removal of existing ones.
Updating the betweenness centrality in real-time is extremely challenging, given its computational cost.
However, due to the inherent parallelism of our design, the framework can scale not only to large graphs but also to rapidly changing ones.

Each of the $p$ available machines is responsible for updating the data structures and partial centrality scores for $n/p$ sources.
The system can monitor the average time $\overline{t_{S}}$ needed for each machine to process a source, given the addition of a new edge or the removal of an existing one.
Thus, the average time $\overline{t_{U}}$ to produce updated betweenness scores for all vertices and edges upon arrival of an update is
$\overline{t_{U}} ={\overline{t_S} \times n / p} + \overline{t_{M}}$, where $\overline{t_{M}}$ is the average time needed to merge the results.

Assuming that an evolving graph has an average rate of updates per time unit $F = 1 / t_{I}$, the system can always (on average) produce updated betweenness scores before the new time period, if $\overline{t_{U}} < t_{I}$.
However, if the system measures an increased rate of arrival $F' > F$, it can adjust the number of machines to $p' > {\overline{t_S} \times n / (t'_{I}-\overline{t_{M}})}$ to guarantee online updates, assuming that the average time per source remains unchanged, and that $(t'_{I} > \overline{t_{S}} + \overline{t_{M}})$, i.e., the inter-arrival time is larger than the inherent serial part of the algorithm.

\subsection{A MapReduce embodiment}

Various paradigms can be used to embody the proposed parallel framework, such as parallel stream processing engines (e.g., Storm, S4, and Samza).
Due to its ease of use and popularity, we deploy and experiment with a MapReduce embodiment on a Hadoop cluster.

\begin{figure}[t]
\begin{center}
	\includegraphics[scale=0.38]{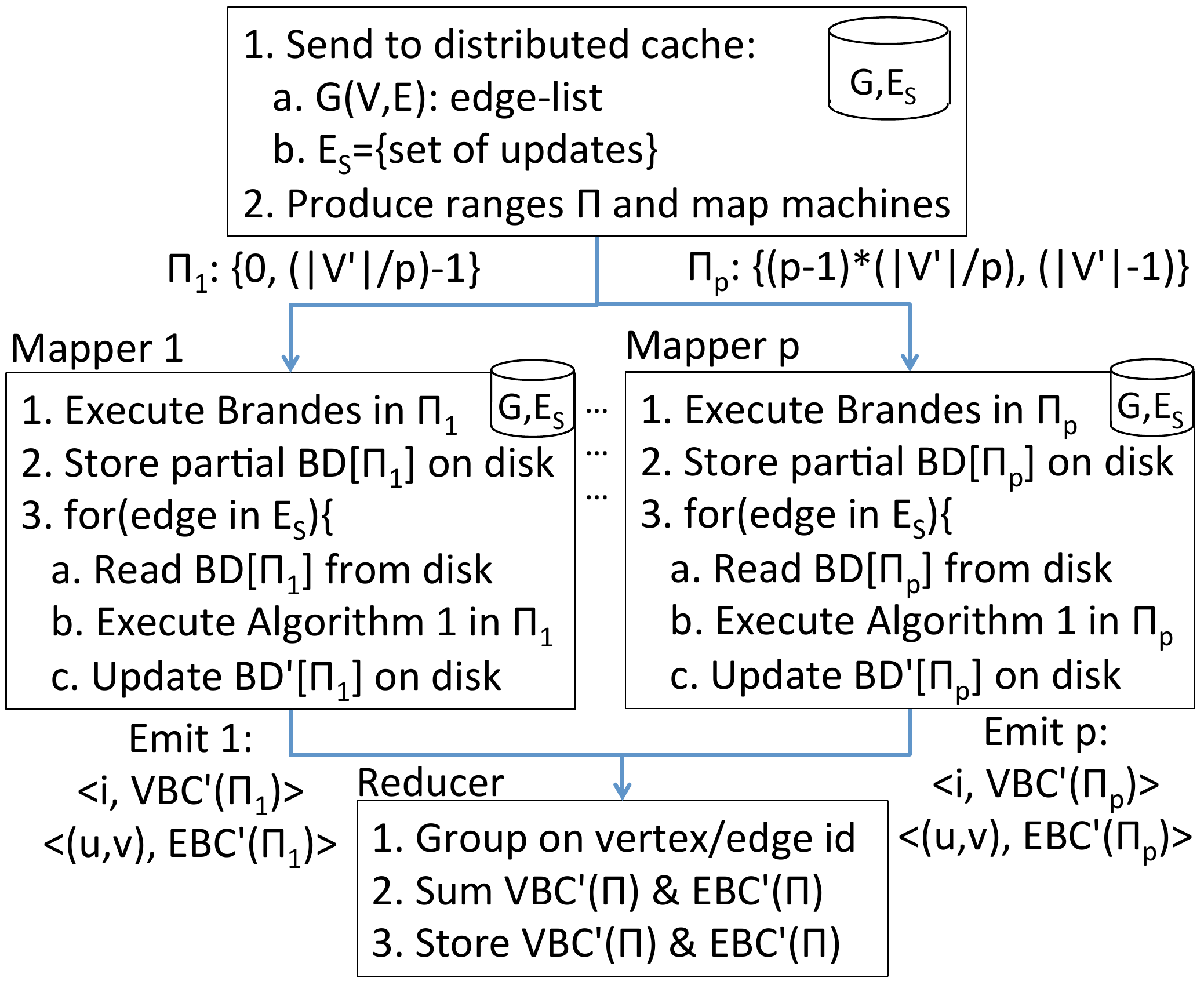}
\caption{ MapReduce version of our framework.}
\label{fig:mr-framework}
\end{center}
\end{figure}

Figure~\ref{fig:mr-framework} illustrates a MapReduce adaptation of our algorithmic framework.
The graph $G(V,E)$ and the set of updates $E_S$ (new edges to be added and existing edges to be removed) are replicated on all machines via distributed cache, and loaded in memory.
We generate an input for each mapper $i$ that represents a partition $\Pi_i$ of the graph.
Each partition is comprised of two integers that represent the first and last ID of the range of sources for which the particular mapper $i$ is responsible.
The data structures $BD[\Pi_i]$ created during step 1 are stored locally on the disk of each machine.

The Map function processes all edges in $E_S$ in sequence and updates the betweenness centrality.
For each update, it emits key-value pairs of vertex or edge IDs together with their partial betweenness centrality ($PBC$) by source $s$, i.e., $\langle id, \textsc{vbc}_{s}(id)|\textsc{ebc}_{s}(id) \rangle$, where $id$ is either a vertex or an edge identifier.
All the intermediate pairs are sent to the reducers who are responsible for producing the final aggregated betweenness results for all vertices and edges.
Each Reduce function aggregates the partial betweenness score of one element (vertex or edge) in the graph.
The final value of the computation is the new betweenness score for each element of the graph after the set of updates $E_S$ is applied.

The signatures of the functions are as follows:
\begin{eqnarray*}
\textsf{Map}: & &   \langle \Pi_i, \emptyset \rangle \  \rightarrow \ \left[ \langle id, \textsc{pbc}_{s}(id) \rangle \  \forall id \in G, \forall s \in \Pi_i \right] \\
\textsf{Reduce}: & & \langle id, \left[ \textsc{pbc}_{s}(id), \ldots \right] \ \forall s \in V \rangle  \rightarrow \langle id, BC(id) \rangle
\end{eqnarray*}

\section{Experimental Results}\label{sec:exp-methodology-results}

We evaluate our algorithmic framework on real and synthetic graphs to assess its performance over Brandes' algorithm and its ability to scale to large graphs.

\spara{Datasets.}
We use two types of graphs: synthetic and real.
The synthetic graphs are created with a synthetic social graph generator~\citep{sala10graphmodels}, which produces graphs with properties, such as degree distribution and clustering coefficient, similar to real social graphs.
The synthetic graphs enable us to experiment with graphs that maintain properties of real social graphs while being able to freely increase their size (see Table~\ref{tab:all-nets}).

Table~\ref{tab:all-nets} also reports the details of the real graphs we use.
They are taken from the KONECT collection\footnote{\url{http://konect.uni-koblenz.de/graphs}} and come from different domains:
\dataset{wiki-elections} (\textsf{WE} for short, election votes for Wikipedia admins),
\dataset{epinions} (\textsf{EP}, trust among Epinion users),
\dataset{facebook} (\textsf{FB}, friendships among Facebook users),
\dataset{slashdot} (\textsf{SD}, replies among Slashdot users),
\dataset{dblp} (co-authorships among scholars),
and \dataset{amazon} (\textsf{AMZ}, product ratings by Amazon users).
To make the results comparable between real and synthetic graphs as well as with previous works, we use the largest connected component (LCC) of the real graphs.

\begin{table}[t!]
\caption{Description of the graphs used. AD: average degree, CC: clustering coefficient, ED: effective diameter.}
\centering
\small
\tabcolsep=0.12cm
\begin{tabular}{lrrrrrr}
\toprule
\hspace{-4mm}& Dataset	&	$|V|(LCC)$	&	$|E|(LCC)$	&	AD		&	CC		&	ED		\\
\midrule
\multirow{4}{-2mm}{\begin{sideways} \textsf{\textcolor[rgb]{0.50,0.51,0.53}{synthetic}} \end{sideways}}
\hspace{-4mm}& 1k		&	\num{1000}	&	\num{5895}	&	11.8		&	0.263	&	5.47		\\
\hspace{-4mm}& 10k		&	\num{10000}	&	\num{58539}	&	11.7		&	0.219	&	6.56		\\
\hspace{-4mm}& 100k		&	\num{100000}	&	\num{587970}	&	11.8		&	0.207	&	7.07		\\
\hspace{-4mm}& 1000k	&	\num{1000000}	&	\num{5896878}	&	11.8		&	0.204	&	7.76		\\
\midrule
\multirow{6}{-2mm}{\begin{sideways} \textsf{\textcolor[rgb]{0.50,0.51,0.53}{real-world}} $\;$ \end{sideways}}
\hspace{-4mm}& \dataset{wikielections}		&	\num{7066}	&	\num{100780}	&	8.3		&	0.126	&	3.78		\\
\hspace{-4mm}& \dataset{slashdot}			&	\num{51082}	&	\num{117377}	&	51.1		&	0.006	&	5.23		\\
\hspace{-4mm}& \dataset{facebook}			&	\num{63392}	&	\num{816885}	&	63.7		&	0.148	&	5.62		\\
\hspace{-4mm}& \dataset{epinions}			&	\num{119130}	&	\num{704571}	&	12.8		&	0.081	&	5.49		\\
\hspace{-4mm}& \dataset{dblp}				&	\num{1105171}	&	\num{4835099}	&	8.7		&	0.6483	&	8.18		\\
\hspace{-4mm}& \dataset{amazon}			&	\num{2146057}	&	\num{5743145}	&	3.5		&	0.0004	&	7.46		\\
\bottomrule
\end{tabular}
\label{tab:all-nets}
\end{table}

\spara{Graph updates.}
For edge addition in the synthetic graphs, we generate the stream of added edges $E_S$ by connecting $100$ random unconnected pairs of vertices.
For the real graphs, each edge has an associated timestamp of its real arrival time, so we simply replay them in order.
For edge removal in the synthetic graphs, we randomly select $100$ existing edges to construct the stream of removed edges $E_S$.
For real graphs, we remove the last $100$ edges that are added in each graph and do not create a graph partition.
The use of real arrival times in graphs is an important difference from previous studies of betweenness centrality updates.
This scenario allows us to simulate the evolution of a real system, and thus assess the capability of our framework to update the betweenness centrality \emph{online}.

\spara{Implementation.}
We implement our algorithmic framework in Java and use the JUNG graph library\footnote{\url{http://jung.sourceforge.net}} for basic graph operations and maintenance.
For the out-of-core version, we store $BD[\cdot]$ in a single file, and read it sequentially in memory source by source.
If any update is needed for the current source, it is performed in place on disk rather than overwriting the whole file.
This enhancement limits the writes on disk to a minimum.
In the experiments, we compare the performance of three versions of the framework: ($1$)~in memory with predecessors lists (\textsf{MP}), ($2$)~in memory without predecessors lists (\textsf{MO}), ($3$)~on disk without predecessors lists (\textsf{DO}).

\textbf{Infrastructure.}
For the single-machine version (both in-memory and out-of-core), we use high-end servers with 8-core Intel Xeon $@2.4$GHz CPU and $50$GB of RAM each.
For the parallel version, we use a Hadoop cluster with hundreds of machines with 8-core Intel Xeon $@2.4$GHz CPU and $24$GB of RAM each.
We report the average performance over $10$ executions of the algorithms for each experimental setup.

\subsection{Speedup over Brandes' algorithm}

\spara{Predecessors list optimization.}
Figure~\ref{fig:rm_speedups_additions_cdf} presents the cumulative distribution function (CDF) of the speedup over Brandes' algorithm when testing the three versions of the framework on edge addition.
Each point in the graph represents the speedup when adding one of the $100$ edges in the stream (averaged over $10$ runs).
The results show that removing the predecessors lists can actually boost the performance (the \textsf{MO} version is always faster than the \textsf{MP} version).
As the algorithm does not need to create the lists nor to maintain them, the overhead, and thus the overall execution time, is reduced.

\begin{figure*}[t]
\begin{center}
	\includegraphics[scale=1]{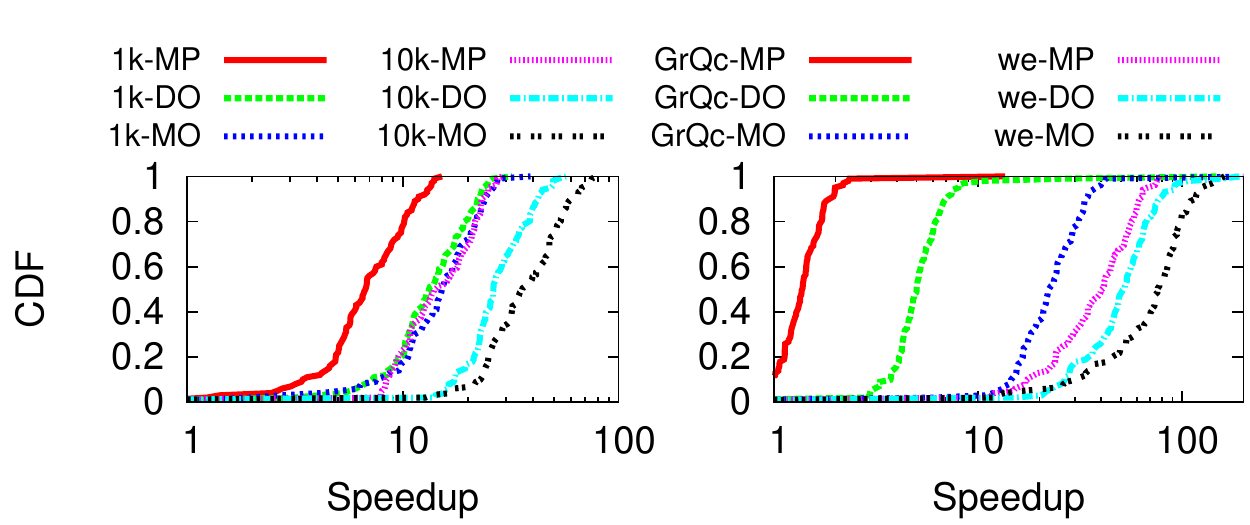}
\caption{Speedup of the framework's 3 versions on synthetic and real graphs executed on single machines (addition).}
\label{fig:rm_speedups_additions_cdf}
\end{center}
\end{figure*}

\spara{Related work comparison.}
As shown in Table~\ref{tab:summary-exp-results-rel-work}, the average performance of our framework is comparable to the reported results of previously proposed techniques.
The method in~\cite{kas13betweenness} shows faster results in networks with low clustering coefficient where changes do not affect many vertices.
However, in social networks with high clustering, the method by~\citet{kas13betweenness} and especially QUBE~\cite{lee12qube} are slow due to high cost in updating vertex centrality (in QUBE, many vertices might be included in the minimum union cycle to be updated).

We also compared with the method by~\citet{green12iter-betweenness} using a version of their code.
By taking into account implementation differences (Java vs. C), the speedups observed are comparable to our method.
However, under limited main memory, the method by~\citeauthor{green12iter-betweenness} fails to tackle a medium-sized network like \dataset{slashdot}. 
Instead, as demonstrated later in Table~\ref{tab:summary-exp-results}, our framework can handle even larger datasets using out-of-core techniques with small main memory footprint and significant speedups.

Additionally, these past techniques compute only vertex centrality, while our method computes both vertex and edge centrality with the shown speedups.

\begin{table}[t]
\caption{Speedup comparison with related work.}
\tabcolsep=0.12cm
\centering
\small
\begin{tabular}{r|c|c|rrr}
\toprule
Dataset				&	$|V|$	&	MO avg (max)	&	 \cite{kas13betweenness}	&	\cite{lee12qube} 	&	\cite{green12iter-betweenness} \\
\midrule
\dataset{wikivote}		&	7k	&	75 (181)	&		&	3	&			\\
\dataset{contact}		&	10k	&	75 (153)	&		&	4	&			\\
\dataset{UCI (fb-like)}	&	2k	&	32 (90)	&	18	&		&			\\
\dataset{ca-GrQc}		&	4k	&	31 (378)	&	68	&	2	&	40		\\
\dataset{ca-HepTh}		&	8k	&	42 (80)	&	358	&		&	40		\\
\dataset{adjnoun}		&	.1k	&	48 (172)	&		&		&	20		\\
\dataset{ca-CondMat}	&	19k	&	94 (395)	&		&		&	109		\\
\dataset{as-22july06}		&	23k	&	70 (291)	&		&		&	61		\\
\dataset{slashdot} (50GB)	&	51k	&	88 (178)	&		&		&	X		\\
\bottomrule
\end{tabular}
\label{tab:summary-exp-results-rel-work}
\end{table}

\spara{Out-of-core performance.}
When $BD[\cdot]$ is stored on disk (\textsf{DO}) rather than in memory (\textsf{MO}), we observe a decrease of the speedup due to the slower access time of the disk.
Overall, the \textsf{DO} version is more than $10 \times$ faster than Brandes' for the $1$k and more than $30 \times$ for the $10$k graph (median values).
The time to process a single edge depends heavily on which parts of the graph it connects and how many structural changes it produces.
The in-memory version is CPU bound, so this variability is reflected in the execution time.
On the other hand, the out-of-core version is I/O bound, and the execution time is dominated by disk access and the variability of the CPU time spent becomes latent.
In the remainder of this section we use the \textsf{DO} version.
Key speedup results are summarized in Table~\ref{tab:summary-exp-results}.

\begin{table}[b]
\caption{Summary of key speedup results.}
\centering
\small
\begin{tabular}{r|rrr|rrr}
\toprule
Dataset				&	\multicolumn{3}{c|}{Addition}	&	\multicolumn{3}{c}{Removal}	\\
					&	Min & Med & Max			&	Min & Med & Max		\\
\midrule
1k					&	3	&	12	&	23		&	2	&	10	&	19	\\
10k					&	16	&	34	&	62		&	2	&	35	&	155	\\
100k					&	21	&	49	&	96		&	4	&	45	&	134	\\
1000k				&	5	&	10	&	20		&	1	&	12	&	78	\\
\midrule
\dataset{wikielections}	&	9	&	47	&	95		&	1	&	45	&	92	\\
\dataset{slashdot}		&	15	&	25	&	121		&	8	&	24	&	127	\\
\dataset{facebook}		&	10	&	66	&	462		&	1	&	102	&	243	\\
\dataset{epinions}		&	24	&	56	&	138		&	2	&	45	&	90	\\
\dataset{dblp}			&	3	&	8	&	15		&	3	&	8	&	429	\\
\dataset{amazon}		&	2	&	4	&	15		&	2	&	3	&	5	\\
\bottomrule
\end{tabular}
\label{tab:summary-exp-results}
\end{table}

\begin{figure*}[t]
\begin{center}
	\includegraphics[scale=1]{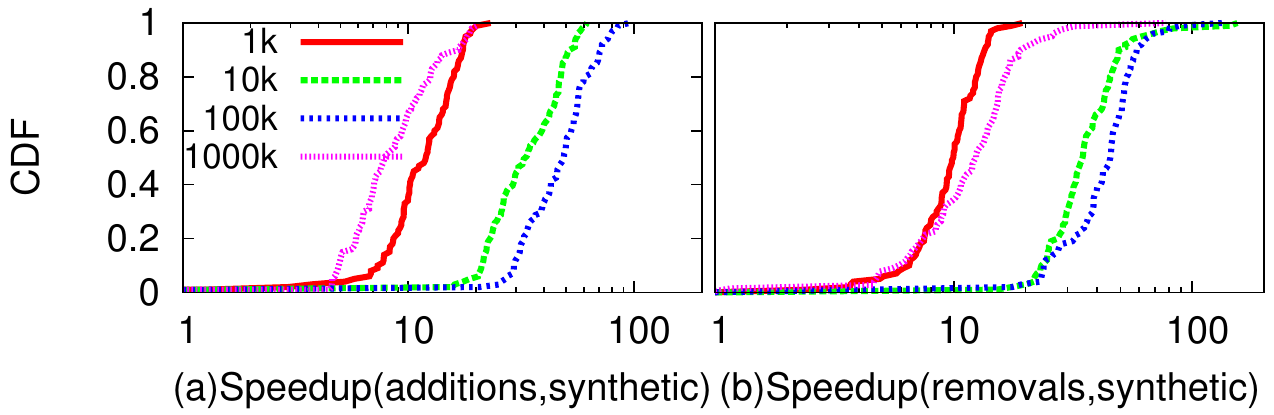}
	\includegraphics[scale=1]{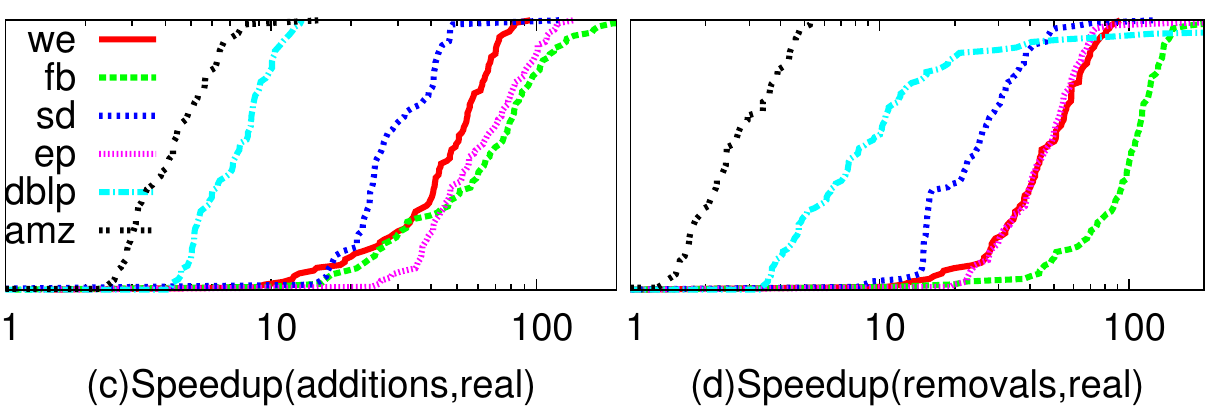}
\caption{Speedup of \textsf{DO} version on synthetic/real graphs executed on a MapReduce cluster (additions/removals).}
\label{fig:mr_speedups_synthetic_real_additions_removals_cdf}
\end{center}
\end{figure*}

\spara{MapReduce speedup.}
Figure~\ref{fig:mr_speedups_synthetic_real_additions_removals_cdf}(a) shows the CDF of speedup over Brandes' algorithm when executing the \textsf{DO} version on a MapReduce cluster for addition of edges.
In the experiment, we adjust the number of mappers so that each mapper is assigned $1$k sources per graph.
Brandes' algorithm is compared with the cumulative execution time of our algorithm, i.e., the sum of execution times across all mappers and reducers.
By increasing the graph size from $1$k to $100$k vertices, the median speedup increases from $\approx 10$ to $\approx 50$.
When increasing the graph size to $1000$k vertices, the median speedup drops to $\approx 10$.
Compared to the experiments on a single machine, there is an increase in the variability of the framework's performance when accessing the disks on the MapReduce cluster.
This effect is partly due to contention on disk access on the cluster caused by concurrent jobs, as well as increased computation load per machine and source.
Overall, the use of parallel execution leads to improved speedups for larger graphs that would be impossible to process on a single machine.
As an additional benefit, we also get reduced wall-clock time.

Figure~\ref{fig:mr_speedups_synthetic_real_additions_removals_cdf}(b) shows the CDF of speedup over Brandes' algorithm on a MapReduce cluster for removal of edges.
The setup is similar to the previous experiment.
By increasing the graph size from $1$k to $100$k vertices, the median speedup increases from $\approx 10$ to $\approx 45$.
When increasing the graph size to $1000$k vertices, the median speedup drops to $\approx 12$.
In this case, the speedup is slightly higher than when adding edges, because the removal of edges reduces the shortest paths between vertices and causes slightly less computational load.

\spara{Real graph structure.}
Figures~\ref{fig:mr_speedups_synthetic_real_additions_removals_cdf}(c) and (d) show the CDF of speedup over Brandes' algorithm for the real graphs when adding or removing edges, respectively.
Also in this case we adjust the number of mappers so that each mapper is assigned $1$k sources per graph.

In the edge addition, \dataset{facebook} exhibits the highest variability with a median speedup of $\approx 66$.
In the edge removal, \dataset{dblp} exhibits higher variability than \dataset{facebook} with a median speedup of $\approx 8$.
When adding edges on \dataset{slashdot}, which has a number of edges similar to \dataset{wikielections} but fewer vertices, our framework exhibits lower variability and smaller maximum speedup than on \dataset{wikielections}.
It also performs better on \dataset{facebook} than \dataset{slashdot}, both in addition and removal of edges, even if the two graphs have approximately the same number of vertices.
One reason may lie in the higher clustering coefficient of \dataset{wikielections} and \dataset{facebook}, which reduces the number of structural changes upon update.

In support to our hypothesis, for \dataset{amazon} we observe a low median speedup of $\approx 4$.
The low performance is due to the structural properties of this graph: very low clustering coefficient and high diameter lead to many structural changes upon edge addition or removal, and thus higher computational load.
For example, in both addition and removal of edges, we observe that on \dataset{dblp}, which is of the same order of magnitude as \dataset{amazon} but with a much higher clustering coefficient, our method achieves about double the speedup than on \dataset{amazon}.
We conjecture that performance on a larger graph is tightly connected with its structural properties, the longer disk access time, and more computational load per source (more vertices to be traversed).
Exploring the connections between algorithm's performance and graph properties is an interesting path for future investigation.

\subsection{Scalability for online updates}

\begin{figure}[t]
\begin{center}
	\includegraphics[scale=0.7]{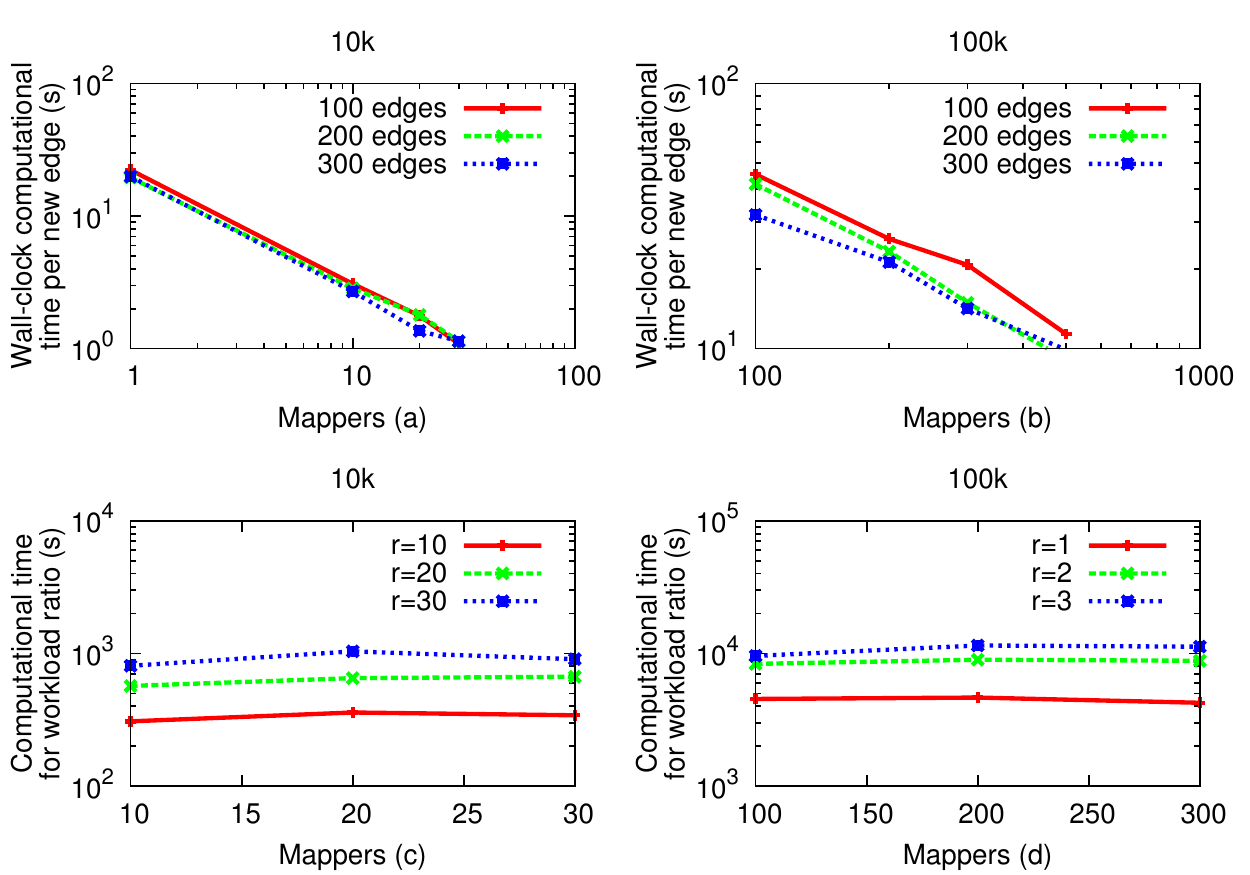}
\caption{(a-b) Computation time under increasing number of mappers.
(c-d) Computation time under constant ratio of workload over mappers.}
\label{fig:scaling_mr}
	\includegraphics[scale=0.7]{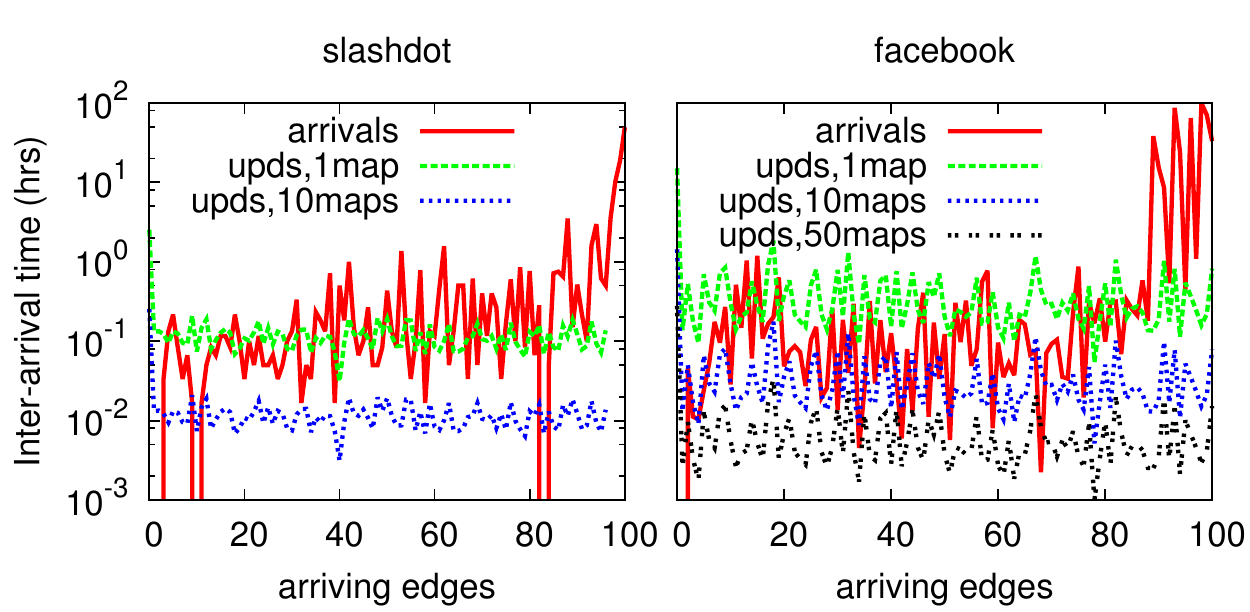}
\caption{Inter-arrival time of \dataset{facebook} and \dataset{slashdot} edges and update times for betweenness centrality.}
\label{fig:updates-interarrivals}
\end{center}
\end{figure}
Figures~\ref{fig:scaling_mr}(a-b) analyze the strong scaling properties of the algorithm in the case of edge addition.
In these experiments we keep the workload fixed and increase the parallelism level.
By employing a larger number of mappers, the overall execution time decreases almost linearly regardless of the workload (i.e., $100$ vs. $300$ added edges) and graph size (i.e., $10$k vs. $100$k).
As expected, our algorithm is embarrassingly parallel and shows very good scalability properties.

Figures~\ref{fig:scaling_mr}(c-d) explore the weak scaling properties of the algorithm in the case of edge addition.
In these experiments we keep the workload per processing unit fixed, and we increase the workload and parallelism level proportionally.
The workload for the system is represented by the number of edges updates within a period of time.
To keep the ratio of workload per mapper constant, we increase the number of edges updates proportionally to the number of mappers.
As shown in the figures, the total execution time remains constant at different levels of ratio (e.g., $1$ vs $3$, or $10$ vs. $30$) regardless of the parallelism level.
These results show that our parallel algorithmic framework can scale to larger workload simply by adding more machines.

Figure~\ref{fig:updates-interarrivals} demonstrates the \emph{online} capabilities of the algorithm on two real graphs.
The figure shows the inter-arrival time of new edges and the time needed by the framework to produce updated betweenness values.
In Table~\ref{tab:percents} we report the fraction of edges for which the framework was unable to produce updates on time and the corresponding average delay.
For \dataset{slashdot}, the framework manages to produce online updates for $98.91\%$ of edges with 10 mappers.
For \dataset{facebook} instead, the arrival rate is higher and $10$ machines are not enough.
However, the scalability of our algorithm allows to simply use more machines to decrease the response time of the system.
Thus, with 100 mappers, the system can produce online updates for $98.99\%$ of edges.

\begin{table}[t]
\caption{Edges missed and average delay vs. scaling.}
\centering
\small
\begin{tabular}{lrSS}
\toprule
Dataset			&	{mappers}	&	{\% missed}	&	{avg. delay (s)}	\\
\midrule
\dataset{slashdot}	&	1		&	44.565		&	257.9	\\
\dataset{slashdot}	&	10		&	1.087		&	32.4		\\ \hline
\dataset{facebook}	&	1		&	69.697		&	1061.1	\\
\dataset{facebook}	&	10		&	19.192		&	96.6		\\
\dataset{facebook}	&	50		&	3.030		&	8.6		\\
\dataset{facebook}	&	100		&	1.010		&	5.5		\\
\bottomrule
\end{tabular}
\label{tab:percents}
\end{table}

\subsection{Use-case: Girvan-Newman community detection}\label{sec:gn}

Our framework allows a faster update of the betweenness of edges, thus enabling several applications including community detection with the Girvan-Newman method~\cite{girvan02community}.
This method relies primarily on the computation of the betweenness of all edges.
It iteratively removes the edge with the highest centrality to create disconnected components.
The procedure is repeated until no edges remain, thus enabling the construction of a hierarchy of communities.
In practice, this algorithm has been abandoned due to the high cost incurred from the recalculation of the betweenness on the modified graph.
However, our framework allows a faster update of betweenness by taking into account only the affected parts of the graph, thus attaining an order of magnitude faster execution over Brandes when running the Girvan-Newman algorithm on a single machine (Figure~\ref{fig:girvan-newman}).

\begin{figure}[t]
\begin{center}
	\includegraphics[width=3.5in]{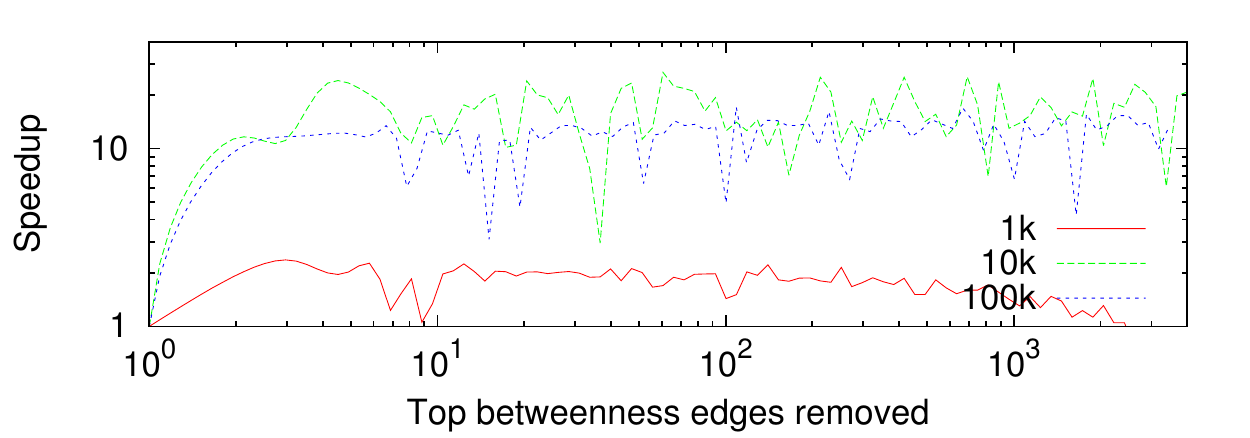}
\caption{Girvan-Newman: continuous removal of edge with highest betweenness and re-computation on synthetic graphs.}
\label{fig:girvan-newman}
\end{center}
\end{figure}

\section{Conclusions}\label{sec:discussion}

The computational complexity of most existing graph algorithms makes them impractical in nowadays massive and dynamic networks.
In order to scale graph analysis to real-world applications and to keep up with their highly dynamic nature, we need to devise new approaches specifically tailored for modern parallel stream processing engines that run on clusters of shared-nothing commodity hardware.

In this paper we introduce an algorithmic framework for computing betweenness centrality incrementally in large evolving graphs where edges and vertices are added and removed.
Our experimental results demonstrate that our framework is capable of scaling-out to a large number of machines without additional overhead.
This feature allows it to process graphs whose size is orders of magnitude larger than what previously reported in the literature.

Moreover, our method shows very good scale-up properties, which lead to an almost linear decrease of the execution time needed to update the betweenness centrality on parallel systems. As a result, our method is able to keep up with the incoming rate of updates in large real-world graphs and in an online fashion.

The scalability achieved by our framework opens the doors to new applications for real-world networks.
For instance, our framework can be exploited for online detection and prediction of emerging leaders and communities in social networks.



\begin{thebibliography}{32}
\providecommand{\natexlab}[1]{#1}
\providecommand{\url}[1]{\texttt{#1}}
\expandafter\ifx\csname urlstyle\endcsname\relax
  \providecommand{\doi}[1]{doi: #1}\else
  \providecommand{\doi}{doi: \begingroup \urlstyle{rm}\Url}\fi


\enlargethispage*{2\baselineskip}

\bibitem[Sariy\"{u}ce et~al.(2013)]{sariyuce13betweennessGPUs}
Sariy\"{u}ce, Ahmet Erdem and Kaya, Kamer and Saule, Erik and \c{C}ataly\"{u}rek, \"{U}mit V.
\newblock Betweenness Centrality on GPUs and Heterogeneous Architectures.
\newblock In \emph{6th Workshop on General Purpose Processor Using Graphics Processing Units}, pages 76--85, 2013.

\bibitem[Ang(2011)]{ang11mmog}
C.~S. Ang.
\newblock Interaction networks and patterns of guild community in massively
  multiplayer online games.
\newblock \emph{Social Network Analysis and Mining}, 1\penalty0 (4):\penalty0
  341--353, 2011.

\bibitem[Anthonisse(1971)]{anthonisse71rush-digraphs}
J.~Anthonisse.
\newblock The rush in a directed graph.
\newblock Technical Report BN9/71, Stichting Mathematisch Centrum, Amsterdam,
  Netherlands, 1971.

\bibitem[Bader and Madduri(2006)]{bader06parallel-betweenness}
D.~Bader and K.~Madduri.
\newblock Parallel algorithms for evaluating centrality indices in real-world networks.
\newblock In \emph{Int. Conf. on Parallel Processing}, pages 539--550, 2006.

\bibitem[Boldi and Vigna(2013)]{Boldi13axiomscentrality}
P.~Boldi and S.~Vigna.
\newblock Axioms for centrality.
\newblock {\url{http://arxiv.org/abs/1308.2140}}, 2013.


\bibitem[Brandes(2001)]{brandes01betweenness}
U.~Brandes.
\newblock A faster algorithm for betweenness centrality.
\newblock \emph{J. of Mathematical Sociology}, 25\penalty0 (2):\penalty0
  163--177, 2001.

\bibitem[Brandes(2008)]{brandes08variants-betweenness}
U.~Brandes.
\newblock On variants of shortest-path betweenness centrality and their generic
  computation.
\newblock \emph{Social Networks}, 30\penalty0 (2):\penalty0 136--145, May 2008.

\bibitem[Brandes and Pich(2007)]{brandes07centralityestimate}
U.~Brandes and C.~Pich.
\newblock Centrality estimation in large networks.
\newblock \emph{Int. J. of Bifurcation and Chaos}, 17\penalty0 (7):\penalty0
  2303--2318, 2007.

\bibitem[Burt(1992)]{Burt92structuralholes}
R.~Burt.
\newblock \emph{Structural Holes: The Social Structure of Competition}.
\newblock Harvard University Press, 1992.

\bibitem[Catanese et~al.(2012)Catanese, Ferrara, and
  Fiumara]{catanese12forensic}
S.~Catanese, E.~Ferrara, and G.~Fiumara.
\newblock Forensic analysis of phone call networks.
\newblock \emph{Social Network Analysis and Mining}, 3\penalty0 (1):\penalty0
  15--33, 2012.

\bibitem[{De Francisci Morales}(2013)]{deFrancisciMorales2013samoa}
G.~{De Francisci Morales}.
\newblock {SAMOA: A Platform for Mining Big Data Streams}.
\newblock In \emph{Int. Work. on Real-Time Analysis \& Mining of Social
  Streams@WWW}, pages 777--778, 2013.
  
\bibitem[{De Francisci Morales} and Bifet(2015)]{deFrancisciMorales2015samoa}
G.~{De Francisci Morales} and A.~Bifet.
\newblock SAMOA: Scalable Advanced Massive Online Analysis.
\newblock \emph{JMLR}, 16\penalty0 (Jan):\penalty0 149--153, 2015.

\bibitem[Freeman et~al.(1991)Freeman, Borgatti, and
  White]{freeman91flow-betweenness}
C.~Freeman, S.~Borgatti, and D.~White.
\newblock Centrality in valued graphs: A measure of betweenness based on
  network flow.
\newblock \emph{Social Networks}, 13\penalty0 (2):\penalty0 141--154, 1991.

\bibitem[Freeman(1977)]{freeman77betweenness}
L.~Freeman.
\newblock A set of measures of centrality based on betweenness.
\newblock \emph{Sociometry}, 40\penalty0 (1):\penalty0 35--41, 1977.

\bibitem[Girvan and Newman(2002)]{girvan02community}
M.~Girvan and M.~E.~J. Newman.
\newblock Community structure in social and biological networks.
\newblock \emph{National Academy of Sciences of USA}, 99\penalty0
  (12):\penalty0 7821--7826, June 2002.

\bibitem[Granovetter(1973)]{Gran73weakties}
M.~Granovetter.
\newblock The strength of weak ties.
\newblock \emph{Amer. J. of Sociology}, 78\penalty0 (6):\penalty0 1360--80, May
  1973.

\bibitem[Green et~al.(2012)Green, McColl, and Bader]{green12iter-betweenness}
O.~Green, R.~McColl, and D.~A. Bader.
\newblock A fast algorithm for streaming betweenness centrality.
\newblock In \emph{SOCIALCOM-PASSAT}, pages 11--20, 2012.

\bibitem[Green et~al.(2013)Green, and Bader]{green13datastructurebetweenness}
O.~Green, and D.~A. Bader.
\newblock Faster betweenness centrality based on data structure experimentation.
\newblock In \emph{International Conference on Computational Science}, 2013.



\bibitem[Jeong et~al.(2001)Jeong, Mason, Barab\'{a}si, and
  Oltvai]{jeong01protein-nets}
H.~Jeong, S.~Mason, A.~Barab\'{a}si, and Z.~Oltvai.
\newblock Lethality and centrality in protein networks.
\newblock \emph{Nature}, 411:\penalty0 41, 2001.

\bibitem[Kahng et~al.(2003)Kahng, Oh, Kahng, and
  Kim]{kahng03betweenness-correlation}
G.~Kahng, E.~Oh, B.~Kahng, and D.~Kim.
\newblock Betweenness centrality correlation in social networks.
\newblock \emph{Phys. Rev. E}, 67:\penalty0 017101, 2003.

\bibitem[Kas et~al.(2013)Kas, Wachs, M.~Carley, and Carley]{kas13betweenness}
M.~Kas, M.~Wachs, K.~M.~Carley, and L.~R. Carley.
\newblock Incremental algorithm for updating betweenness centrality in
  dynamically growing networks.
\newblock In \emph{ASONAM}, pages 33--40, August 2013.

\bibitem[Kourtellis and Iamnitchi(2013)]{kourtellis13p2pcentrality}
N.~Kourtellis and A.~Iamnitchi.
\newblock Leveraging peer centrality in the design of socially-informed
  peer-to-peer systems.
\newblock \emph{IEEE Transactions in Parallel and Distributed Systems},
25:\penalty0 9, pp. 2364--2374, Sep. 2014.

\bibitem[Kourtellis et~al.(2013)Kourtellis, Alahakoon, Simha, Iamnitchi, and
  Tripathi]{kourtellis12kpath}
N.~Kourtellis, T.~Alahakoon, R.~Simha, A.~Iamnitchi, and R.~Tripathi.
\newblock Identifying high betweenness centrality nodes in large social
  networks.
\newblock \emph{Social Network Analysis and Mining}, 3:\penalty0 899--914,
  2013.

\bibitem[Lee et~al.(2012)Lee, Lee, Park, Choi, and Chung]{lee12qube}
M.-J. Lee, J.~Lee, J.~Park, R.~Choi, and C.-W. Chung.
\newblock A quick algorithm for updating betweenness centrality.
\newblock In \emph{Int. Conf. on World Wide Web}, pages 351--360, April 2012.

\bibitem[Liljeros et~al.(2001)Liljeros, Edling, Amaral, Stanley, and
  Aberg]{liljeros01sexual-contacts}
F.~Liljeros, C.~Edling, L.~Amaral, H.~Stanley, and Y.~Aberg.
\newblock The web of human sexual contacts.
\newblock \emph{Nature}, 411:\penalty0 907, 2001.


\bibitem[McLaughlin et~al.(2014)]{mclaughlin14gpu-betweenness}
A. McLaughlin and D. Bader.
\newblock Revisiting Edge and Node Parallelism for Dynamic {GPU} Graph Analytics.
\newblock In Workshop on Multithreaded Architectures and Applications (MTAAP), 2014.

\bibitem[Maglaras and Katsaros(2012)]{maglaras11adhoc}
L.~A. Maglaras and D.~Katsaros.
\newblock New measures for characterizing the significance of nodes in wireless
  ad hoc networks via localized path-based neighborhood analysis.
\newblock \emph{Social Network Analysis and Mining}, 2:\penalty0 97--106, 2012.

\bibitem[Nasre et~al.(2014)]{nasre14mfcs-betweenness}
M.~Nasre, M.~Pontecorvi, and V.~Ramachandran.
\newblock Betweenness Centralityâ Incremental and Faster.
\newblock In \emph{Mathematical Foundations of Computer Science}, pages 577--588, LNCS(8635), 2014.

\bibitem[Newman(2005)]{newman05random-walk-betweenness}
M.~Newman.
\newblock A measure of betweenness centrality based on random walks.
\newblock \emph{Social Networks}, 27\penalty0 (1):\penalty0 39--54, 2005.

\bibitem[Ramalingam and Reps(1996)]{ramalingam92incremental}
G.~Ramalingam and T.~Reps.
\newblock {An Incremental Algorithm for a Generalization of the Shortest-Path
  Problem}.
\newblock \emph{Journal of Algorithms}, 21\penalty0 (2):\penalty0 267--305,
  1996.

\bibitem[Riondato and Kornaropoulos(2014)]{riondato2014approxbetweenness}
M.~Riondato and E.~M. Kornaropoulos.
\newblock {Fast Approximation of Betweenness Centrality Through Sampling}.
\newblock In \emph{WSDM}, 2014.

\bibitem[Sala et~al.(2010)Sala, Cao, Wilson, Zablit, Zheng, and
  Zhao]{sala10graphmodels}
A.~Sala, L.~Cao, C.~Wilson, R.~Zablit, H.~Zheng, and B.~Y. Zhao.
\newblock Measurement-calibrated graph models for social network experiments.
\newblock In \emph{19th Int. Conf. on the World Wide Web}, pages 861--870,
  Raleigh, NC, USA, 2010.



\enlargethispage*{\baselineskip}


\end{thebibliography}

\end{document}